\documentclass[lettersize,journal]{IEEEtran}
\usepackage{amsmath,amsfonts}
\usepackage{algorithm}
\usepackage{array}
\usepackage[caption=false,font=normalsize,labelfont=sf,textfont=sf]{subfig}
\usepackage{textcomp}
\usepackage{stfloats}
\usepackage{url}
\usepackage{verbatim}
\usepackage{graphicx}
\graphicspath{{figures/}}
\usepackage{cite}
\usepackage{hyperref}
\hypersetup{
    colorlinks=true,
    linkcolor=blue,
    citecolor=blue,
    urlcolor=black
}
\usepackage{amssymb} 
\usepackage{algorithmicx}
\usepackage{algpseudocode}
\usepackage{amsthm}  
\newtheorem{lemma}{Lemma}
\newtheorem{theorem}{Theorem}
\usepackage{multirow} 
\usepackage{threeparttable}
\usepackage{booktabs}
\hyphenation{op-tical net-works semi-conduc-tor IEEE-Xplore}

\begin{document}

\title{FOBNN: Fast Oblivious Inference via Binarized Neural Networks}

\author{Xin Chen, Zhili Chen, Shiwen Wei, Junqing Gong, and Lin Chen 
\thanks{This work has been submitted to the IEEE for possible publication. Copyright may be transferred without notice, after which this version may no longer be accessible.}
\thanks{Xin Chen, Zhili Chen, Shiwen Wei, and Junqing Gong are with the Software Engineering Institute, East China Normal University, Shanghai 200062, China (e-mail: xinchen@stu.ecnu.edu.cn; zhlchen@sei.ecnu.edu.cn; 51265902068@stu.ecnu.edu.cn; jqgong@sei.ecnu.edu.cn).}
\thanks{Lin Chen is with the Engineering Research Centre of Applied Technology on Machine Translation and Artificial Intelligence, Macao Polytechnic University, Macau, SAR, China (e-mail: lchen@mpu.edu.mo).}}

\markboth{Journal of \LaTeX\ Class Files,~Vol.~14, No.~8, August~2021}%
{Shell \MakeLowercase{\textit{et al.}}: A Sample Article Using IEEEtran.cls for IEEE Journals}


\maketitle

\begin{abstract}
The remarkable performance of deep learning has sparked the rise of Deep Learning as a Service (DLaaS), allowing clients to send their personal data to service providers for model predictions. A persistent challenge in this context is safeguarding the privacy of clients' sensitive data. Oblivious inference allows the execution of neural networks on client inputs without revealing either the inputs or the outcomes to the service providers. In this paper, we propose FOBNN, a Fast Oblivious inference framework via Binarized Neural Networks. In FOBNN, through neural network binarization, we convert linear operations (e.g., convolutional and fully-connected operations) into eXclusive NORs (XNORs) and an Oblivious Bit Count (OBC) problem. For secure multiparty computation techniques, like garbled circuits or bitwise secret sharing, XNOR operations incur no communication cost, making the OBC problem the primary bottleneck for linear operations. To tackle this, we first propose the Bit Length Bounding (BLB) algorithm, which minimizes bit representation to decrease redundant computations. Subsequently, we develop the Layer-wise Bit Accumulation (LBA) algorithm, utilizing pure bit operations layer by layer to further boost performance. We also enhance the binarized neural network structure through link optimization and structure exploration. The former optimizes link connections given a network structure, while the latter explores optimal network structures under same secure computation costs. Our theoretical analysis reveals that the BLB algorithm outperforms the state-of-the-art OBC algorithm by a range of 17\% to 55\%, while the LBA exhibits an improvement of nearly 100\%. Comprehensive proof-of-concept evaluation demonstrates that FOBNN outperforms prior art on popular benchmarks and shows effectiveness in emerging bioinformatics.
\end{abstract}

\begin{IEEEkeywords}
Oblivious Inference, oblivious bit count, binarized neural networks, garbled circuits, network structure optimization.
\end{IEEEkeywords}

\section{Introduction}
\IEEEPARstart{T}{he} growing trend of Deep Learning as a Service (DLaaS) is particularly noteworthy. In this model, service providers host deep learning models on their servers. Clients then submit their data to these providers for analysis, in exchange for valuable insights. While DLaaS enables the deployment of deep neural networks on resource-constrained devices, it simultaneously raises significant privacy concerns due to the potential exposure of sensitive client information to these service providers.

To address the privacy challenge, oblivious inference has attracted significant attention. This technique allows clients to utilize models hosted by service providers without revealing the privacy of their inputs or outputs, while simultaneously protecting the models for the service providers. Various cryptographic protocols have been proposed for oblivious inference, including homomorphic encryption (HE) \cite{liu2017oblivious,juvekar2018gazelle}, secret sharing (SS) \cite{ibarrondo2021banners,rathee2020cryptflow2}, garbled circuits (GCs) \cite{rouhani2018deepsecure,riazi2019xonn} and their hybrids \cite{mohassel2017secureml,zhang2023scalable}. 
Each of these cryptographic tools possesses unique characteristics and involves various trade-offs. HE enables computations on encrypted data but it is computaionally intensive and its complexity scales with the circuits' depth. A standalone SS-based scheme is computationally efficient, yet requires three or more non-colluding computation parties, which is a strong assumption. Additionally, the communication rounds count scales linearly with the circuits' depth involved.  

On the other hand, GC supports arbitrary functionalities while requiring only a constant round of communications disregarding the circuits' depth. Nevertheless, it incurs a high communication cost and a significant overhead for both multiplication and addition, the fundamental operations in deep learning. Hybrid-protocols aim to harness the strengths of various protocols, but their round complexity scales linearly with the depth of the deep learning model, which is crucial for deep learning accuracy. In the context of oblivious inference, constant communication rounds are favored as clients prefer initial or final interaction with service providers, rather than continuous engagement during the inference process. Thus, the focus of this paper is GC-based oblivious inference.

Typically, GC-based oblivious inference for traditional neural networks incurs substantial performance overheads. The reason is that these neural networks rely heavily on linear operations consisting of multiplications and additions, and both computations in GCs are computationally intensive. To tackle this, we binarize the traditional neural networks and apply GCs to the binarized neural networks (BNNs) \cite{courbariaux2016binarized}. BNNs excel at achieving efficient model compression, thereby accelerating computations \cite{wu2016quantized,hubara2018quantized}. Furthermore, through binarization, we can convert multiplications into 
nearly cost-free XNORs \cite{riazi2019xonn} and formulate additions as an Oblivious Bit Count (OBC) problem, which involves the oblivious accumulation of numerous secret bits. As a result, GCs for BNNs are greatly accelerated, and the OBC problem emerges as the new performance bottleneck for linear operations.

From the above, we can identify two major challenges in designing GC-based oblivious inference for BNNs. The first challenge lies in optimizing linear operations by improving the solution to the OBC problem. The state-of-the-art solution is the tree-adder \cite{riazi2019xonn}, but it suffers from non-compact bitwise representations as illustrated in Section \ref{sec:blb-motivation}. How to design a more efficient algorithm for the OBC problem is critical to overcome this challenge. The second challenge involves making up for the accuracy loss due to binarization by optimizing neural network structures while factoring in secure inference costs. Binarization must cause loss of information and thus loss of accuracy. How to minimize the accuracy loss and oblivious inference costs by adjusting network structures is challenging.

To address these two challenges, we introduce FOBNN, a fast oblivious inference framework based on customized BNNs. Fundamentally, FOBNN customizes a BNN for a given traditional neural network to maintain its accuracy as much as possible. Subsequently, it enhances the efficiency of linear operations by devising efficient solutions to the OBC problem. Particularly, in FOBNN, we devise the Bit Length Bounding (BLB) algorithm, which provides optimal bit representations based on the range analysis of values, and thus significantly reduces redundant computations. BLB outperforms the tree-adder \cite{riazi2019xonn} by a range of 17\% to 55\%. To further enhance efficiency, we develop the Layer-wise Bit Accumulation (LBA) algorithm, leveraging layer-wise bit operations to minimize performance overheads. LBA achieves an improvement of nearly 100\% upon the tree-adder.
Ultimately, FOBNN optimizes network structures of BNNs to improve accuracy by link optimization and structure exploration. The former optimizes link connections under a given neural network structure by training ternary weights including zero values and deleting links with zero-weights, while the latter explores optimal network structures under same secure computation costs.
    
We highlight our contributions as follows.

\begin{itemize}
    \item \textbf{FOBNN Framework.} We propose FOBNN, a fast GC-based oblivious inference framework via BNNs. FOBNN effectively overcomes the substantial performance overhead of GCs for traditional neural networks, while retains its advantage of constant rounds of communications.
    \item \textbf{Linear Computation Acceleration.} We develop two fast oblivious algorithms for accelerating linear operations, Bit Length Bounding and Layer-wise Bit Accumulation. Both algorithms minimize the number of non-XOR gates by utilizing compact bitwise representations, and thus are of independent interest for secure computation techniques such as garbled circuits and bitwise secret sharing. 
    \item \textbf{Network Structure Optimization.} We devise efficient methods to optimize link connections under a given network structure and explore optimal network structures under same secure computation costs to improve the accuracy of oblivious inference. This effectively makes up for the accuracy loss due to neural network binarization.
    \item \textbf{FOBNN Implementation and Application.} We implement FOBNN, conduct extensive experiments and demonstrate its application in popular classification and bioinformatics benchmarks. Experimental results show that our scheme outperforms state-of-the-art methods. 
\end{itemize}

The remainder of this paper is structured as follows. Section \ref{sec:Preliminaries} introduces technical preliminaries. Section \ref{sec:Sqni} details the design of our FOBNN framework. Section \ref{sec:BLB} and Section \ref{sec:LBA} outlines BLB and LBA algorithms, respectively, and provides theoretical complexity analysis. Section \ref{sec:BNSCO} discusses BNN structure optimization. In Section \ref{sec:Experiments}, comprehensive evaluations of FOBNN framework are conducted. Section \ref{sec:RW} briefly reviews related work, and finally, Section \ref{sec:Conclusion} concludes this paper.

\section{Preliminaries} \label{sec:Preliminaries}
\subsection{Neural Networks}
A Convolutional Neural Network (CNN) is composed of multiple layers that perform different functions. The output of each layer serves as the input of the next layer. Below we describe the functionality of different layers.

\emph{Linear Layers}. The linear layers in a CNN contain Convolutional (CONV) layers and Fully-Connected (FC) layers. The vector dot product (VDP) operations are performed to compute outputs in these layers. 

\emph{Batch Normalization Layers}. Batch normalization (BN) is primarily utilized to standardize features and accelerate network convergence. A BN layer typically follows a linear layer, normalizing its output. The trainable parameters of BN layers are learned during the training phase \cite{ioffe2015batch}. Concisely, it can be expressed as $Y=\gamma'X+\beta'$, where $\gamma'$ is positive and both $\gamma'$ and $\beta'$ can be inferred from the trained parameters \cite{zhu2022securebinn}.

\emph{Activation Layers}. Activation functions apply non-linear transformations while preserving input-output dimensionality. $Sigmod$ and Rectified Linear Unit ($ReLU$) are two commonly used activation functions in CNNs.

\emph{Pooling Layers}. The pooling layers are used to reduce the size of outputs. Popular pooling methods include Max-pooling (MP) and Average-pooling (AP). We adopt MP in this paper. 

We adopt the idea of BNNs \cite{courbariaux2016binarized} to quantize CNNs. Specifically, we replace $Sigmod$ or $ReLU$ with $Sign$ activation function, and the weights and activations in our BNNs are constrained to $+1$ or $-1$.

\subsection{Ternary Weight Networks}\label{sec:twn}
Ternary weight networks (TWNs) are a subtype of neural networks with weights constrained to $+\alpha$, 0 and $-\alpha$ \cite{li2022ternary}. The simple, efficient, and accurate TWNs can address the limited storage and computational resources issues for real world AI applications. TWNs seek to strike a balance between the BNNs counterparts and the full precision weight networks (FPWNs) counterparts. The ternary weights can be viewed as sparse binary weights, thus we substitute the binary weights of BNNs with ternary weights in our optimization scheme to achieve better network performance and efficient storage and computation. In the inference phase, we extract $\alpha$ and convert the weights into ternary values: $+1$, 0 and $-1$. The $\alpha$ can be transferred to the BN layers for computation, and the computation of the CONV layers is aligned with the binary-valued computation of the BNNs.

\subsection{Knowledge Distillation}
Knowledge distillation (KD), a representative type of model compression and acceleration, can effectively learn a small student model from a large teacher model \cite{gou2021knowledge}. In KD, a small student model is generally supervised by a large teacher model, and the student model mimics the teacher model to obtain competitive or even superior performance \cite{hinton2015distilling}. KD's ability to improve model performance on resource-limited devices is an excellent fit with our BNNs, hence we use it to compensate for the loss of accuracy after binarization.

\subsection{Garbled Circuits}
Garbled circuits (i.e., Yao's protocol), or GC in short, is a generic cryptographic tool for secure two-party computation \cite{yao1986generate,lindell2009proof}. In GC, an arbitrary function $f$ can be computed based on private inputs held by two parties, without revealing any information about each party's input to the other. Specifically, one holds its private input $x_1$, and the other holds $x_2$. First, one party (i.e., the garbler) prepares a GC to evaluate $f(x_1,x_2)$ and transmits the GC, garbled labels on input $x_1$, and the output decoding table to the other party (called the evaluator). The evaluator needs to run a 1-out-of-2 oblivious transfer (OT) protocol with the garbler \cite{rabin2005exchange} to get its garbled labels on input ${x_2}$ obliviously. Then, the evaluator computes the garbled output with garbled labels of two parties. Finally, the plain output can be decoded by the output decoding table.

\subsection{Threat Model}
In oblivious inference, a client retain its private input data $x$ while a service provider possesses trained deep-learning model $\theta$. They collaborate through a secure evaluation protocol to perform neural network inference on the client's input $x$ using the service provider's model $\theta$. The client obtain the inference result without disclosing its input $x$ or outcome to the service provider, while the service provider's model $\theta$ remain concealed from the client.

We adhere to the honest-but-curious adversary model, as established in prior works \cite{juvekar2018gazelle, riazi2019xonn, zhang2023scalable}. Here, neither party fully trusts the other, but both follow the protocol's specifications. Both aim to extract information from exchanged messages regarding the other's private data.

\section{The FOBNN Framework} \label{sec:Sqni}
The FOBNN achieves fast GC-based oblivious inference via BNNs. Figure \ref{fig:architecture} illustrates the workflow of FOBNN. Specifically, in FOBNN, we first convert CNNs into customized BNNs by training them while simultaneously binarizing their weights, where KD is employed to maximize accuracy retention. Then the BNNs are converted into boolean circuits, where we accelerate linear operations by proposing two efficient solutions to the OBC problem: Bit Length Bounding and Layer-wise Bit Accumulation. In addition, we also optimize BNN structures by link optimization and structure exploration to obtain more efficient boolean circuits. Finally, the resulted boolean circuits are employed in the garbled circuit protocol to achieve oblivious inference.

\begin{figure*}[htbp]
    \centering
    \begin{minipage}{0.64\textwidth}
        \centering
        \includegraphics[width=\textwidth]{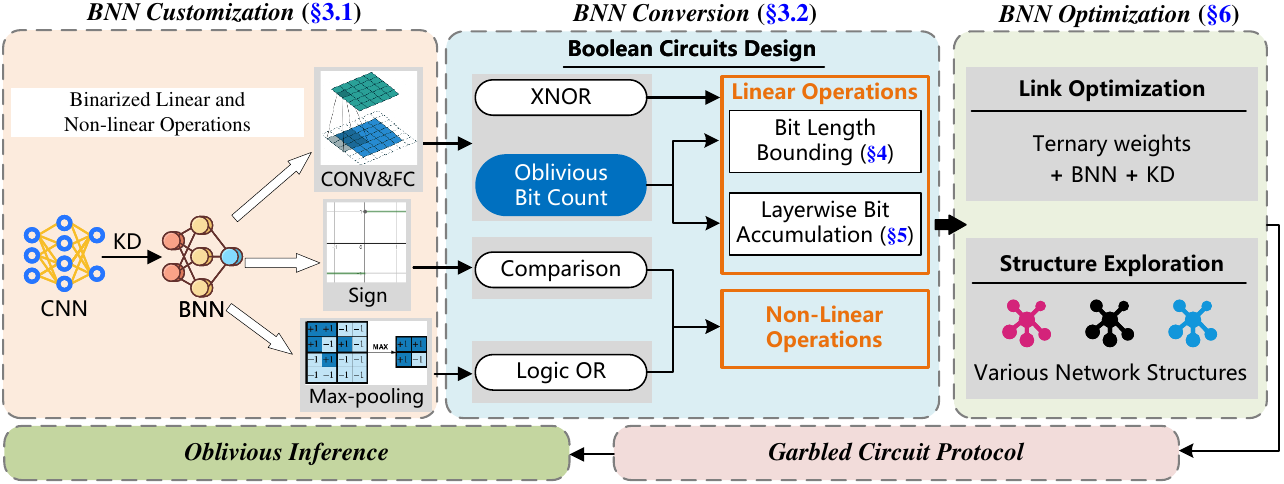}
        \caption{The Workflow of FOBNN Framework.}
        \label{fig:architecture}
    \end{minipage}
    \hfill 
    \begin{minipage}{0.34\textwidth}
        \centering
        \includegraphics[width=\textwidth]{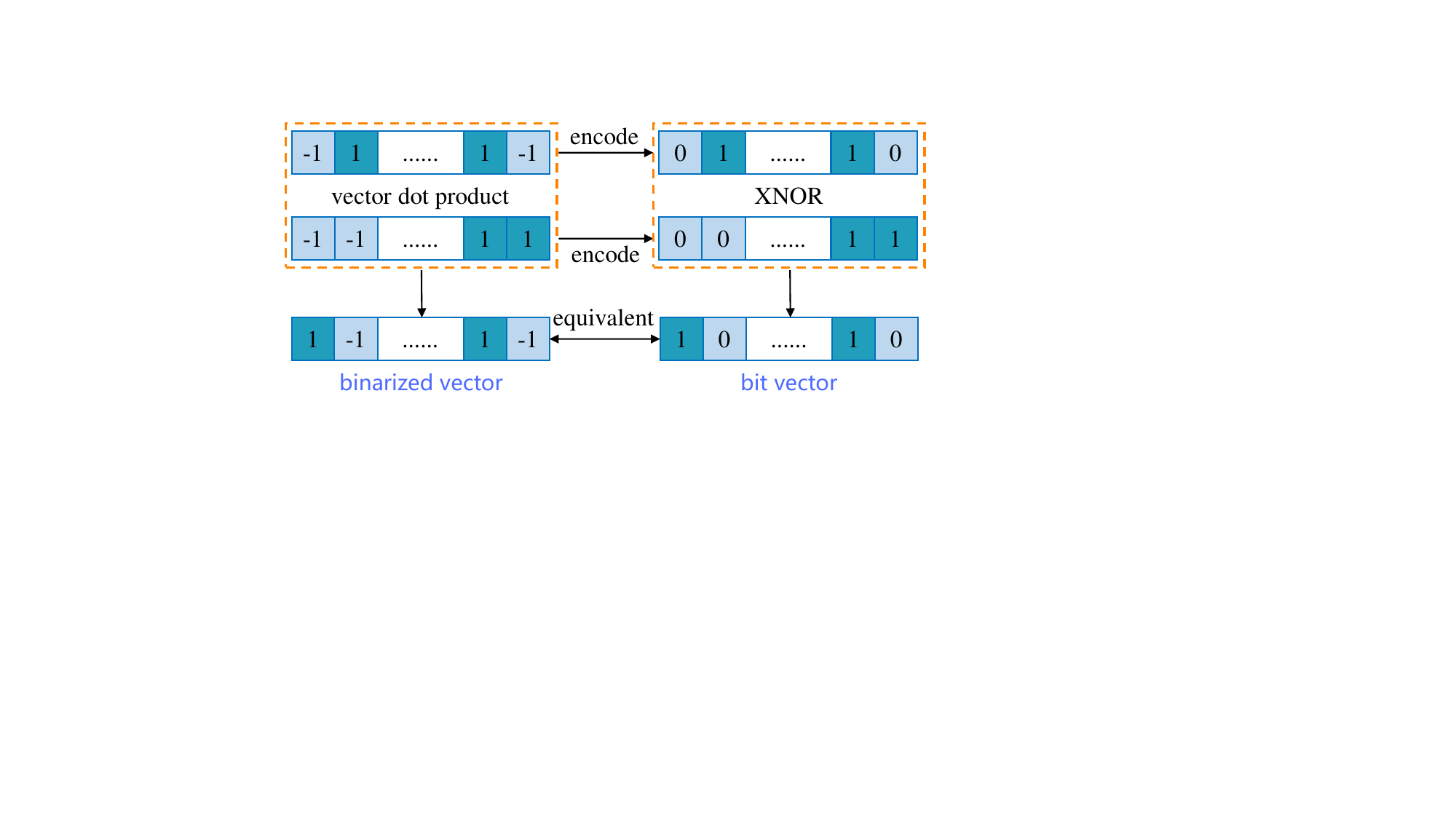}
        \caption{Encode and XNOR operation of two binary vectors.}
        \label{fig:xnor}
    \end{minipage}
\end{figure*}

In this section, we first convert CNNs into customized BNNs in Section \ref{sec:CQN}. Then we introduce BNN conversion to boolean circuits and BNN optimization in Section \ref{sec:OI}. Section \ref{sec:SA} discusses the security of our FOBNN framework.

\subsection{BNN Customization}\label{sec:CQN}
A notable limitation of BNNs is their accuracy. To tackle it, we develop CNN-based BNNs via tailored structure and parameter optimization. It preserves the baseline CNN's structure with minimal depth/kernel modifications and accuracy retention. 
Furthermore, we employ the teacher-student structure of KD to enhance accuracy. 
 
\emph{Customized CONV.} CONV layer weights are binarized ($\pm1$) without requiring bias vectors. Customization focuses on kernel size, filter count, padding, and strides. We maintain the original kernel size for 2D convolutions or appropriately increase it for 1D convolutions, while tuning filter count and other parameters.

\emph{Customized Activation.} We substitute the conventional activation function (such as $ReLU$) with a binary activation (BA) function, specifically the $Sign$ function \cite{courbariaux2016binarized}. Additionally, we incorporate a BN layer followed by the BA function. 

\emph{Customized Pooling.} MP is preferable to AP in BNNs because the maximum of binary elements retains its binary nature \cite{riazi2019xonn}. Therefore, binary MP is employed to downscale the output. The key adjustable parameter is the pool size, with strides typically set equal to the pool size by default.

\emph{Customized FC.} FC layers, also known as dense layers, are commonly positioned before the output layer. The weights and custom parameters in these layers are manipulated in a manner analogous to those in the customized CONV layers.

On the whole, during network binarization, we carefully increase network depth and layer parameters to preserve original accuracy with minimal structural addition. Unlike prior methods involving complex operations like pruning \cite{riazi2019xonn} or slimmable networks \cite{zhang2023scalable}, our customization avoids these additional operations.

\subsection{BNN Conversion and Optimization}\label{sec:OI}
Each layer of BNNs can be computed by using a simplified arithmetic circuit. Now, we describe how to convert different BNN layers into their corresponding boolean circuits.

\textbf{Oblivious Linear Layers:}  In our binarized networks, both inputs and weights of CONV and FC layers are binary. Linear operations are performed using VDP, and their corresponding boolean circuits can be realized through XnorPopcount operations \cite{courbariaux2016binarized}. Here, we reformulate these operations as XNOR operations combined with the OBC problem. The process is outlined below.

(1) Convert Multiplications to XNORs: We examine the VDP between an input vector $\mathbf{x} \in \{-1,+1\}^{n}$ and a weight vector $\mathbf{w} \in \{-1,+1\}^{n}$. By encoding $-1$ as $0$ and $+1$ as $1$, we observe that the multiplication operation exhibits symmetry and can be reduced to an eXclusive NOR (XNOR) operation \cite{riazi2019xonn}, as illustrated in Figure \ref{fig:xnor}.

(2) Convert Additions to OBC problem: After performing XNOR operations on the converted bit vectors, it suffices to obliviously count the number $1$'s in the resulted bit vector. Let $c_1$ denote the number of $1$'s, and $L_v$ the size of the vector. Consequently, the number of $0$'s (or $-1$ equivalents) is $L_v - c_1$. The output $X$ of the binarized vector is then calculated as 
\begin{equation}\label{eq:vdp-output}
    X = c_1-(L_v - c_1)=2 \cdot c_1 - L_v.
\end{equation}

Since XNORs are nearly overhead-free in garbled circuits, the VDP operations heavily used in neural networks are converted into the OBC problem. We propose two fast solutions to this problem in Section \ref{sec:BLB} and Section \ref{sec:LBA}, which are of independent interest for secure computation techniques such as GCs and bitwise SS. 

\textbf{Oblivious BA Function Layers:} We employ the $Sign(\cdot)$ function as the BA function to output $+1$ or $-1$. In BNNs, a BN layer precedes an activation function layer \cite{zhu2022securebinn}. The combination of these two layers together with Eq.~\eqref{eq:vdp-output} can be realized through a comparison circuit. Specifically, since $\gamma'$ is positive, the combination can be simplified as: 
\begin{equation}\label{eq:bn-sign}
    \begin{aligned}
        Y&=Sign(\gamma'X+\beta')=Sign(\gamma'(2 \cdot c_1 - L_v)+\beta')\\
        &=Sign(2\gamma' c_1 - \gamma' L_v + \beta')\\
        &=Sign(c_1 - \frac{\gamma' L_v - \beta'}{2\gamma'}).
    \end{aligned}
\end{equation}
Thus, it can be realized by comparing $c_1$ and $\frac{\gamma' L_v - \beta'}{2\gamma'}$. We formally optimize the computations of OBC, BN, and activation across three layers for the first time. 

\textbf{Oblivious Binary MP Layers:} We prefer MP to AP in this layer. Since the inputs of MP are binarized, MP is equivalent to performing logical OR over the binary encodings as depicted in \cite{riazi2019xonn}.

To enhance the efficiency of boolean circuits for oblivious inference, we can optimize BNN structures in two ways. Firstly, we refine the link connections within a given BNN structure. Secondly, we explore various BNN structures while maintaining consistent secure computation costs. The BNN structure optimizations are detailed in Section \ref{sec:BNSCO}.

\subsection{Security of FOBNN}\label{sec:SA}
The FOBNN framework leverages GCs to protect against semi-honest adversaries, in line with state-of-the-art oblivious inference models \cite{mohassel2017secureml, liu2017oblivious, rouhani2018deepsecure, riazi2019xonn}. Participants follow the protocol but seek extra information beyond the outcome from communications and internal states. 

In FOBNN framework, a service provider converts a CNN to a customized BNN, converts the BNN to a single boolean circuit, and may also optimize the BNN structure for a more efficient boolean circuit. The service provider holds this boolean circuit as input, converts it into a GC and sends the GC to a client. The client holds a binarized and encoded sample as input, garbles the input into labels with the help of the service privider using an Oblivious Transfer (OT) protocol, and then computes the garbled circuit with the garbled input to get a garbled output, which can be decoded by the decoding table provided by the service provider. This forms a traditional GC protocol. Therefore, the security of FOBNN framework can be reduced to the security of GC protocol, which has been theoretically established in \cite{lindell2009proof}.

\section{Bit Length Bounding}\label{sec:BLB}
In this section, we present our first efficient solution to the OBC problem. Firstly, we describe the motivation. Subsequently, we devise the Bit Length Bounding algorithm. Finally, we analyze the complexity of the BLB algorithm, and compare it with the state-of-the-art method, the tree-adder (TA).

\subsection{Motivation}\label{sec:blb-motivation}
As we know, the most critical step for linear operations of BNNs is to solve the OBC problem. The state-of-the-art solution is the tree-adder \cite{riazi2019xonn}. Specifically, the tree-adder operates by summing two 1-bit numbers to yield a 2-bit result, and extends this process to sum two 2-bit numbers to produce a 3-bit output, and so forth. Our observation is that the sum of any two 1-bit numbers is at most $2$, which is less than the maximum value representable by a $2$-bit number, $(11)_2 = 3$. Similarly, the sum of any two $2$-bit numbers does not exceed $6$, which is again less than the maximum $3$-bit number, $(111)_2 = 7$. More generally, the sum of any two $n$-bit numbers is bounded by $2^{n+1} - 2$, which is inherently less than the maximum $(n+1)$-bit number, $2^{n+1}-1$. Furthermore, this limitation in bit representation can accumulate when multiple sums are involved. For example, the tree-adder sums two $1$-bit numbers to a $2$-bit number at most $2$, and then it sums two $2$-bit numbers resulted previously to at most $4$. The bit representation insufficiency is enlarged.

\subsection{BLB Algorithm}
We propose Bit Length Bounding algorithm to address the above issue. The core idea is to deduce, based on the desired output, the number of inputs to sum and their maximum bitwise representation. Starting with $2$-bit numbers, we note that their maximum value is $3$, which can be expressed as $3=2^2-1=(2^1+1)(2^1-1)=3 \times 1$. Since $1$ is the maximum $1$-bit number, summing three $1$-bit numbers yields the maximum $2$-bit number. Next, we consider summing $2$-bit numbers to form a $3$-bit number. The maximum $2$-bit value is $3$, and the maximum $3$-bit value is $7$. Since $7$ is not divisible by $3$, multiple maximum $2$-bit numbers cannot sum to the maximum $3$-bit number. However, multiple $2$-bit numbers can sum to the maximum $4$-bit number, which is $15$. Specifically, $2^4-1=(2^2+1)(2^2-1)=5 \cdot (2^2-1)$, indicating that summing $5$ maximum $2$-bit numbers yields the maximum $4$-bit number. Generalizing this pattern, when inputs are $2^i$-bit numbers ($i \geq 1$), we sum $2^{2^i}+1$ such numbers to produce a $2^{i+1}$-bit number. This is because the maximum values of $2^i$-bit and $2^{i+1}$-bit numbers are $2^{2^i}-1$ and $2^{2^{i+1}}-1$, respectively, and $2^{2^{i+1}}-1=(2^{2^i}+1)(2^{2^i}-1)$. In summary, our algorithm leverages the properties of bitwise representation to determine the number of inputs required to produce a desired output.

Our BLB algorithm can be divided into two steps as follows.

\subsubsection{Bit Computation} 
In this step, we transform $1$-bit numbers into $2$-bit numbers. Given that two bits can represent values up to $3$, we group every three adjacent $1$-bit numbers. We sum each group using a specialized $1$-bit adder \cite{kolesnikov2009improved} as shown in Figure \ref{fig:adder}. This adder accepts three input bits $a$, $b$, and $c$, and produces two output bits: the sum bit $d_0 = a \oplus b \oplus c$ and the carry-out bit $d_1 = c \oplus((a \oplus c) \wedge (b \oplus c))$. It is very efficient for GCs since it uses only one non-XOR gate. This process results in a set of 2-bit numbers corresponding to each input bit vector.

\begin{figure}[htbp] 
	\centering
	\includegraphics[width=0.24\textwidth]{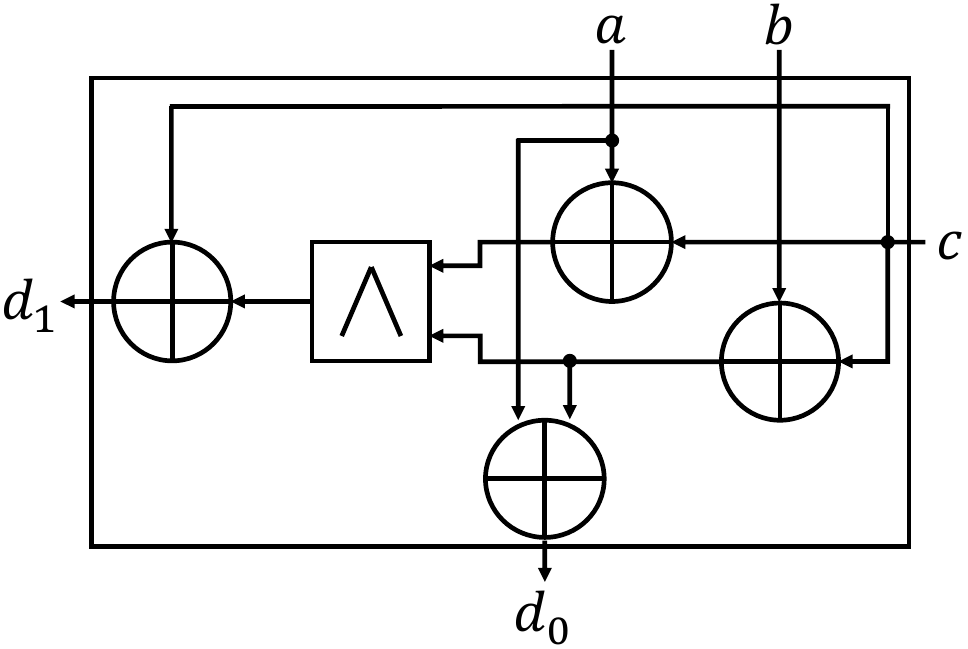}
	\caption{1-bit Adder.}
	\label{fig:adder}
\end{figure}

\subsubsection{Cross-layer Computation} 
Once we have the set of $2$-bit numbers, we can iteratively perform cross-layer computations as detailed below. Assuming the inputs to the cross-layer computation are $2^p$-bit numbers ($p \in \mathbb{Z}^+$), as discussed earlier, the output of this computation comprises $2^{p+1}$ bits.  We begin by organizing the inputs into groups, with each group containing $2^{2^p}+1$ elements. Within each group, we compute the sum of the first $2^{2^p}$ inputs using a tree adder outlined in Algorithm \ref{alg:algorithm1}, proceeding level by level until we reach a single $2^{p+1}$-bit sum. Subsequently, we add the remaining $2^p$-bit number to this sum to obtain the output for each group. By bounding the bit length of inputs and outputs, we ensure that the computation does not overflow, and the output remains a $2^{p+1}$-bit number. Algorithm \ref{alg:algorithm2} provides a detailed description of this process.

Figure \ref{fig:2to4bit} demonstrates our BLB algorithm. We start by dividing a target bit vector of fifteen $1$-bit numbers (grey squares) into five groups of three. A $1$-bit adder circuit processes each group to produce a $2$-bit number (blue squares), comprising a sum and a carry-out bit. Next, in the Cross-layer Computation phase, the first four of these $2$-bit numbers are summed using the tree adder. This involves grouping them into pairs and summing to produce two $3$-bit numbers (purple squares) and then summing these to get a $4$-bit number (green squares), alongside a remaining $2$-bit number. The results are $9$ ($4$ bits) and $1$ ($2$ bits). Their sum does not exceed the maximum unsigned $4$-bit value ($15$), and the final result is a $4$-bit number representing $10$ (i.e., $(1010)_2=10$).

\begin{algorithm}[H]
    \caption{Tree Adder}
    \label{alg:algorithm1}
    \begin{algorithmic}[1]
        \Require $2^p$-bit numbers set $\mathbb{X}=\{x_i\}_{i \in [1..l]}$, $x_i \in \{0,1\}^p$     
        \Ensure A $q$-bit number $S_q$, where $q > 2^p$  
        \State {$q \gets 2^p, \mathbb{X}^0 \gets \mathbb{X}$}
        \While{$|\mathbb{X}^0|>1$}
            \State {$\mathbb{X}^1 \gets \varnothing$, $i \gets 1$, $group \gets |\mathbb{X}^0|/2$, $r \gets |\mathbb{X}^0| \bmod 2$}
            \For{$j=1$ to $group$}
                \State {$s_j \gets \mathbb{X}^0[i]+\mathbb{X}^0[i+1]$}  \Comment{standard GC addition}
                \State {$\mathbb{X}^1 \gets \mathbb{X}^1 \cup \{s_j\}$, $i \gets i+2$}
            \EndFor
            \If{$r=1$} 
                \State {$\mathbb{X}^1 \gets \mathbb{X}^1 \cup \mathbb{X}^0[i]$}
            \EndIf
            \State {$\mathbb{X}^0 \gets \mathbb{X}^1$, $q \gets q+1$}
        \EndWhile 
        \State {$S_q \gets \mathbb{X}^0[0]$}
    \end{algorithmic}
\end{algorithm}

\begin{algorithm}[H]
    \caption{Cross-layer Adder}
    \label{alg:algorithm2}
    \begin{algorithmic}[1]
        \Require $2^p$-bit numbers set $\mathbb{X}=\{x_i\}_{i \in [1..n]}$, $x_i \in \{0,1\}^{2^p}$    
        \Ensure $2^{p+1}$-bit numbers set $\mathbb{Y}=\{y_j\}_{j \in [1..n^{\prime}]}$, where $n^{\prime}=\lceil \frac{n}{2^{2^p}+1} \rceil$, $y_j \in \{0,1\}^{2^{p+1}}$ or a $q$-bit number $S_q$ 
        \State {$t \gets 1, ct \gets \lceil \frac{n}{2^{2^p}+1} \rceil$}  \Comment{count of groups}
        
        \If{$n<2^{2^p}+1$}  \Comment{not enough to form a group}
            \State {$S_q \gets Tree(\mathbb{X})$}  \Comment{sum by tree via Algorithm \ref{alg:algorithm1}} 
            \State {\textbf{return} $S_q$} 
        \EndIf
        
        \For{$j=1$ to $ct$}  \Comment{compute by groups}
            \State {$\mathbb{X}^j \gets \{ \mathbb{X}[t],\cdots,\mathbb{X}[t+2^{2^p}-1] \}$}  \Comment{first $2^{2^p}$ values}
            \State {$S_{2^{p+1}} \gets Tree(\mathbb{X}^j)$}  \Comment{sum by tree via Algorithm \ref{alg:algorithm1}}
            \State {$\mathbb{Y}[j] \gets S_{2^{p+1}} + \mathbb{X}[t+2^{2^p}]$, $t \gets t+2^{2^p}+1$} 
        \EndFor \\
        \Return {$\mathbb{Y}$}
    \end{algorithmic}
\end{algorithm}

\begin{figure}[htbp] 
	\centering
	\includegraphics[width=0.40\textwidth]{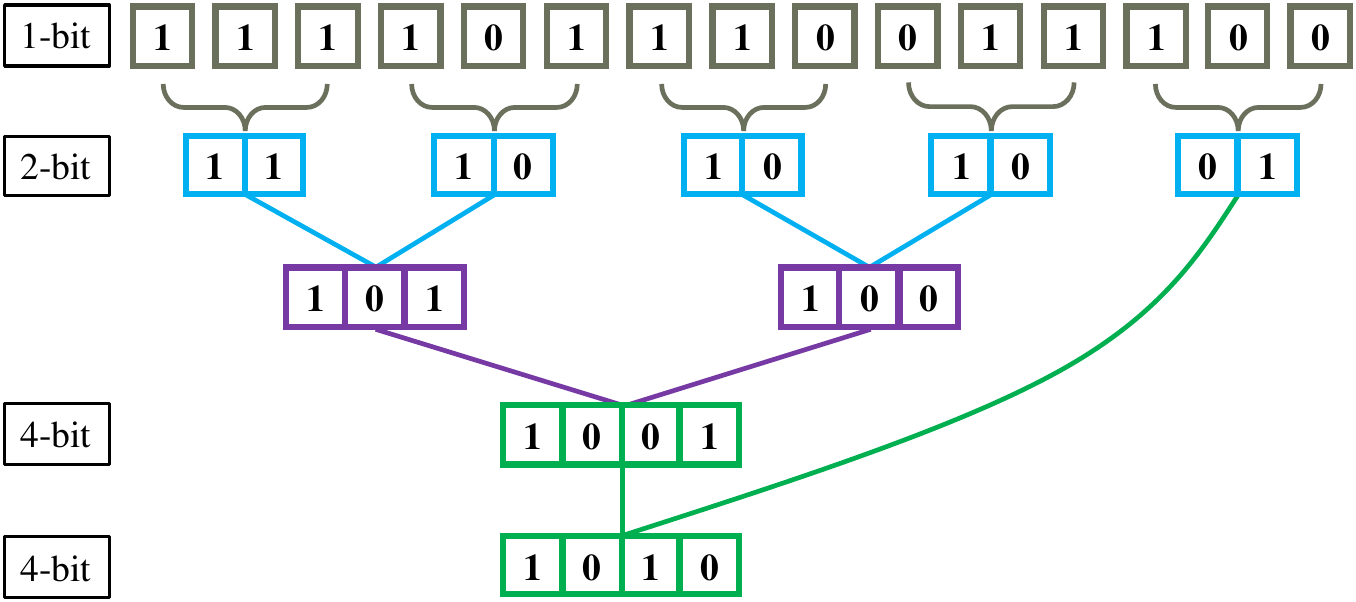}
	\caption{A toy example of BLB solution.}
	\label{fig:2to4bit}
\end{figure}

\subsection{Theoretical Complexity Analysis}\label{sec:blb-xor}
Since XOR operations are well known to be nearly overhead-free in GC protocol \cite{kolesnikov2008improved}, we analyze the complexity of the BLB algorithm by counting the number of non-XOR gates it needs. We compare the complexity of BLB algorithm with that of the tree-adder \cite{riazi2019xonn}, and quantify the improvement of BLB algorithm upon the tree adder. We denote the length of a bit vector by $N$, and the maximum bit length that can represent the result by $L=\lceil \log_{2}(N+1) \rceil$.

\begin{lemma}\label{lem:ts}
    The number of non-XOR gates that the tree adder needs is ${S}_{N}^{ts} = 2\cdot(N-1)-\log_{2}N \approx 2N $.
\end{lemma}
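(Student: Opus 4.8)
The plan is to count non-XOR gates in the tree adder by tracking how many such gates each addition contributes and summing over the whole tree. The key fact is that a standard garbled-circuit ripple-carry addition of two $k$-bit numbers uses exactly $k$ non-XOR gates (one per bit position, using the one-non-XOR-gate full adder of Kolesnikov--Schneider cited as the 1-bit adder in Figure~\ref{fig:adder}); more precisely, adding a $k$-bit number and a $k'$-bit number with $k \le k'$ costs $k'$ non-XOR gates since the carry must be propagated through all $k'$ positions (and the final carry-out contributes the $+1$ that grows the bit width). So the total non-XOR count equals the sum, over all internal addition nodes of the tree, of the output bit-width at that node minus one, or equivalently the bit-width of the larger input.

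First I would set up the tree structure: with $N$ leaf bits, the tree-adder pairs them into $N/2$ additions of two $1$-bit numbers (each producing a $2$-bit result at cost $1$ non-XOR gate), then $N/4$ additions of two $2$-bit numbers (cost $2$ each), and in general at level $t$ (for $t = 1, 2, \dots, \log_2 N$) there are $N/2^t$ additions, each combining two $t$-bit numbers into a $(t+1)$-bit number at a cost of $t$ non-XOR gates. Then I would write the total as
\begin{equation}\label{eq:ts-sum}
    {S}_{N}^{ts} = \sum_{t=1}^{\log_2 N} \frac{N}{2^t}\cdot t.
\end{equation}
The main computational step is evaluating this sum. I would use the standard identity $\sum_{t=1}^{m} t\,x^t = \frac{x(1 - (m+1)x^m + m x^{m+1})}{(1-x)^2}$ with $x = 1/2$ and $m = \log_2 N$; substituting and simplifying, the geometric-type terms collapse because $x^m = 1/N$, yielding $\sum_{t=1}^{\log_2 N} t/2^t = 2 - (\log_2 N + 2)/N$, hence ${S}_{N}^{ts} = 2N - \log_2 N - 2 = 2(N-1) - \log_2 N$.

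For cleanliness I might instead prove the closed form by induction on $N$ (restricting to powers of two, as the statement implicitly does via the clean $\log_2 N$ term): the base case $N = 2$ gives one addition of two $1$-bit numbers, costing $1 = 2(2-1) - \log_2 2$; for the inductive step, a tree on $N$ leaves consists of two subtrees on $N/2$ leaves plus one top-level addition of two $(\log_2(N/2)+1)$-bit numbers at cost $\log_2(N/2) = \log_2 N - 1$, so ${S}_{N}^{ts} = 2\,{S}_{N/2}^{ts} + (\log_2 N - 1)$, and plugging in the inductive hypothesis confirms the formula. The asymptotic estimate ${S}_{N}^{ts} \approx 2N$ is then immediate since the $\log_2 N$ correction is lower order.

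The main obstacle is not the algebra but pinning down the exact per-addition non-XOR cost and the handling of non-power-of-two $N$ (where odd leftover operands are carried up without an addition, as in Algorithm~\ref{alg:algorithm1}); for a rigorous general bound one would argue that carrying an operand up a level is free and that the cost formula $\sum N/2^t \cdot t$ becomes an inequality, but since the stated result uses the exact $\log_2 N$ term I would present the power-of-two case and remark that the general case follows by the same accounting with floors, affecting only lower-order terms.
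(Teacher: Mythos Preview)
Your proposal is correct and follows essentially the same approach as the paper: both set up the same sum $S_N^{ts}=\sum_{t=1}^{\log_2 N}(N/2^t)\cdot t$ based on the observation that level $t$ contains $N/2^t$ additions of $t$-bit operands at cost $t$ non-XOR gates each, and both evaluate it to $2(N-1)-\log_2 N$. The paper carries out the evaluation via the classical multiply-by-$\tfrac12$-and-subtract trick, which is precisely how your cited identity for $\sum t\,x^t$ is derived; your optional inductive argument is a clean alternative but not a genuinely different route.
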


\begin{proof}
    The $\ell$-bit adders ($\ell \in [1..L]$) are used to compute the final result in the TA structure. The number of non-XOR gates for each $\ell$-bit adder is $\ell$, and the number of $\ell$-bit adders is $\frac{N}{2^{\ell}}$. Thus, the total number of non-XOR gates can be computed as follows. 
    \begin{equation}\label{lemma-eq3}
        {S}_{N}^{ts} = \sum_{\ell=1}^{L}(\frac{N}{2^{\ell}} \cdot \ell) = 1 \cdot \frac{N}{2} + 2 \cdot \frac{N}{2^2} + 3 \cdot \frac{N}{2^3} + \cdots + L \cdot \frac{N}{2^L}
    \end{equation}

    Multiplying both sides of Eq.~\eqref{lemma-eq3} by $\frac{1}{2}$ yields Eq.~\eqref{lemma-eq4}. 
    \begin{equation}\label{lemma-eq4}
    	\begin{aligned}
    		\frac{1}{2} {S}_{N}^{ts} = 1 \cdot \frac{N}{2^2} + 2 \cdot \frac{N}{2^3} + 3 \cdot \frac{N}{2^4} + \cdots + L \cdot \frac{N}{2^{L+1}}
    	\end{aligned}
    \end{equation}
    
    Eq.~\eqref{lemma-eq3} minus Eq.~\eqref{lemma-eq4}:
    \begin{equation}\label{lemma-eq5}
    	\begin{aligned}
    		{S}_{N}^{ts} - \frac{1}{2} {S}_{N}^{ts} = \frac{N}{2} + \frac{N}{2^2} + \frac{N}{2^3} + \cdots + \frac{N}{2^L} - L \cdot \frac{N}{2^{L+1}}
    	\end{aligned}
    \end{equation}
    
    We hold $L=\log_{2}{N}$. Eq.~\eqref{lemma-eq5} can be simplified as follows.
    \begin{equation}
    	\begin{aligned}
    		\frac{1}{2} {S}_{N}^{ts} &= \frac{N}{2} + \frac{N}{2^2} + \frac{N}{2^3} + \cdots + \frac{N}{2^{\log_{2}{N}}} - \frac{\log_{2}{N} \cdot N}{2^{\log_{2}{N}+1}} \\
    		{S}_{N}^{ts} &= N \cdot (1+\frac{1}{2}+\frac{1}{2^2}+ \cdots + \frac{1}{2^{\log_{2}{N}-1}}) - \log_{2}N \\
    		{S}_{N}^{ts} &= N \cdot 2 \cdot \frac{N-1}{N} - \log_{2}N \\
    		{S}_{N}^{ts} &=2 \cdot (N-1) - \log_{2}N
    	\end{aligned}
    	\nonumber
    \end{equation}

    The proof is completed.
\end{proof}

The complexity of BLB algorithm is shown in Theorem \ref{theo1}.

\begin{theorem}\label{theo1}
    The number of non-XOR gates that BLB algorithm needs is $${S}_{K}^{blb} = \sum_{\kappa=1}^{K-1}(\frac{N}{\prod_{\ell=0}^{\kappa}(2^{2^{\ell}}+1)} \cdot (2^{2^{\kappa}} \cdot 2^{\kappa} + 2^{2^{\kappa}} - 2)) + \frac{N}{3},$$ and $1.29N < {S}_{K}^{blb} < 1.71N$, where $K =\lceil \log_{2}\log_{2}(N+1) \rceil$ and $N>255$.
\end{theorem}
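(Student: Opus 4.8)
The plan is to establish the closed-form expression for $S_K^{blb}$ first, and then bound the resulting sum. For the closed form, I would track the cost layer by layer exactly as the algorithm proceeds. In the Bit Computation step, fifteen... more precisely, groups of three $1$-bit numbers are collapsed into $2$-bit numbers using the $1$-bit adder of Figure~\ref{fig:adder}, which costs exactly one non-XOR gate per group; since there are $N/3$ groups, this contributes the $\tfrac{N}{3}$ term. Then, in the Cross-layer Computation, at iteration $\kappa$ (starting from $\kappa=1$ with $2$-bit inputs, i.e.\ $p=2^{\kappa-1}$) the algorithm groups $2^{2^p}+1$ numbers of bit length $2^p$, runs a tree adder on the first $2^{2^p}$ of them, and appends the last one. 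I would count: the tree adder on $2^{2^p}$ numbers of length $2^p$ uses $\ell$-bit adders for $\ell$ from $2^p$ up to $2^{p+1}-1$, and then the final append of the leftover $2^p$-bit number into the $2^{p+1}$-bit accumulator costs $2^{p+1}-1$ more; summing these per-group costs and multiplying by the number of groups at level $\kappa$, which is $N/\prod_{\ell=0}^{\kappa}(2^{2^\ell}+1)$ (the denominator being the cumulative group-size reduction including the initial factor of $3=2^{2^0}+1$), gives the stated summand $2^{2^{\kappa}}\cdot 2^{\kappa}+2^{2^{\kappa}}-2$. The indexing here is the fiddly part — I would verify the summand against the toy example of Figure~\ref{fig:2to4bit} ($N=15$, so $K=2$, giving $S_2^{blb}=\tfrac{15}{15}(4\cdot 2+4-2)+\tfrac{15}{3}=10+5=15$, which should match a direct gate count on that figure).

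For the numerical bounds, I would rewrite $S_K^{blb}/N$ as a sum whose terms decay rapidly. The leading term is $\tfrac13$ from the Bit Computation step. The $\kappa=1$ term carries a factor $\tfrac{1}{(2^{2^0}+1)(2^{2^1}+1)} = \tfrac{1}{3\cdot 5}=\tfrac{1}{15}$ times $(4\cdot 2 + 4 - 2)=10$, contributing $\tfrac{10}{15}=\tfrac{2}{3}$; the $\kappa=2$ term carries $\tfrac{1}{3\cdot 5\cdot 17}$ times $(16\cdot 4 + 16 - 2)=78$, contributing $\tfrac{78}{255}$; and for $\kappa\ge 3$ the denominators grow doubly exponentially (next factor $2^{16}+1$) while numerators only grow as $2^{2^\kappa}\cdot 2^\kappa$, so the ratio of numerator to the new denominator factor $2^{2^\kappa}+1$ tends to $2^\kappa$, meaning the tail is dominated by $\sum_{\kappa\ge 3} 2^\kappa / \prod_{\ell<\kappa}(2^{2^\ell}+1)$, a convergent series I can crudely bound. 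I would show the partial sums are increasing in $K$ and converge; the lower bound $1.29N$ comes from evaluating enough terms (through $\kappa=2$ one already gets $\tfrac13+\tfrac23+\tfrac{78}{255}\approx 1.306$), and the upper bound $1.71N$ comes from adding a geometric-type tail estimate for $\kappa\ge 3$ to the explicit partial sum. The hypothesis $N>255$ ensures $K\ge 3$ so the asymptotic regime is entered and the partial sum through $\kappa=2$ is fully present; I would handle the dependence on $N$ by noting $S_K^{blb}/N$ is, up to ceiling effects in the group counts, essentially independent of $N$ once $K$ is fixed, and monotone in $K$.

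The main obstacle I anticipate is twofold. First, getting the per-level gate count exactly right: the tree adder in Algorithm~\ref{alg:algorithm1} does not operate on a power-of-two number of inputs in general ($2^{2^p}$ is a power of two only when $p$ is, and here $p=2^{\kappa-1}$ is indeed a power of two, so this is fine at the top level, but intermediate levels produce odd counts handled by the $r=1$ branch), so I must carefully account for the carry-forward of unpaired numbers and the growing bit width $q$ at each level of the while loop — this is where Lemma~\ref{lem:ts}'s technique (the $S - \tfrac12 S$ telescoping trick) will be reused inside each cross-layer level. Second, the tail estimate for $\kappa\ge 3$ must be tight enough to land below $1.71$ but loose enough to be easy; since the $\kappa=3$ term alone is on the order of $2^3 \cdot 78 / (255 \cdot (2^{16}+1)) \approx 10^{-4}$, the tail is negligible and almost any crude geometric bound suffices, so the real work is pinning down the first three terms and arguing monotonicity and $N$-independence cleanly. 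I would present the bounds by exhibiting $S_K^{blb}/N$ as a monotone convergent sequence in $K$, computing its value at small $K$ explicitly, and bounding the limit.
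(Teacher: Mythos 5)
Your overall route is the same as the paper's: split the count into the Bit Computation term $\tfrac{N}{3}$ plus, for each cross-layer round $\kappa$, (number of groups) $\times$ (per-group cost), with the per-group cost obtained by counting the $\ell$-bit adders in the in-group tree plus the final append of the leftover number — this is exactly the paper's proof together with Appendix~\ref{app:2}, and your toy-example check ($N=15$ giving $15$ gates) is consistent with it. The only substantive difference is in the numerical bounds: the paper bounds $G_{\kappa}=2^{2^{\kappa}}(2^{\kappa}+1)-2$ between $(2^{2^{\kappa}}+1)2^{\kappa}$ and $(2^{2^{\kappa}}+1)(2^{\kappa}+1)$ so that the denominators telescope (which is why it needs $N>255$, i.e.\ the $\kappa=3$ term present, to clear $1.29$), whereas you evaluate the first terms exactly and bound the tail; both work.

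However, there is a concrete indexing error in your tail analysis. The bit width at cross-layer round $\kappa$ is $2^{\kappa}$ (it doubles each round: $2,4,8,16,\dots$), not $2^{2^{\kappa-1}}$; your parametrization $p=2^{\kappa-1}$ happens to agree at $\kappa=1,2$ but not beyond, and it contradicts your own denominator $\prod_{\ell=0}^{\kappa}(2^{2^{\ell}}+1)$. Consequently the group size at round $\kappa$ is $2^{2^{\kappa}}+1$, so the denominator factor after $3\cdot5\cdot17$ is $2^{8}+1=257$, not $2^{16}+1$, and the $\kappa=3$ term is $G_{3}/(3\cdot5\cdot17\cdot257)=(2^{8}\cdot 8+2^{8}-2)/65535\approx 0.035N$ — roughly $300$ times larger than your $\sim10^{-4}$ estimate, so the tail is not as negligible as you claim and your "geometric-type" bound must be redone with the correct factors. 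The conclusion survives a corrected computation: the exact ratio converges to about $\tfrac13+\tfrac23+\tfrac{78}{255}+\tfrac{2302}{65535}+\cdots\approx 1.34$, comfortably inside $(1.29,1.71)$, and the lower bound already follows from the partial sum through $\kappa=2$ once $K\ge 3$ (note $N>255$ actually forces $K\ge4$). One further small point: inside each group the tree adder acts on exactly $2^{2^{\kappa}}$ numbers, a power of two, so the $r=1$ carry-forward branch of Algorithm~\ref{alg:algorithm1} never fires within a group and no special handling is needed there.
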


\begin{proof}
    Our BLB algorithm is divided into two steps, and the number of non-XOR gates is also divided into two parts that are counted separately. The number of non-XOR gates in the {\itshape Bit Computation} step is $\frac{N}{3}$. In the $\kappa$-th {\itshape Cross-layer Computation}, the amount of groups is $A_{\kappa} = \frac{N}{\prod_{\ell=0}^{\kappa}(2^{2^{\ell}}+1)}$ 
    and the number of non-XOR gates in each group can be computed by $G_{\kappa} = 2^{2^{\kappa}} \cdot 2^{\kappa} + 2^{2^{\kappa}} - 2$. 
    Appendix \ref{app:2} shows the derivation process of $G_{\kappa}$. 
    Thus, the total number of non-XOR gates of BLB algorithm can be expressed as follows. 
    \begin{equation}
        \begin{aligned}
            {S}_{K}^{blb} 
            =& \sum_{\kappa=1}^{K-1}(A_{\kappa} \cdot G_{\kappa}) + \frac{N}{3} \\
            =&\sum_{\kappa=1}^{K-1}(\frac{N}{\prod_{\ell=0}^{\kappa}(2^{2^{\ell}}+1)} \cdot (2^{2^{\kappa}} \cdot 2^{\kappa} + 2^{2^{\kappa}} - 2)) + \frac{N}{3} \\
            <&\sum_{\kappa=1}^{K-1}(\frac{N \cdot (2^{2^{\kappa}}+1) \cdot (2^{\kappa}+1)}{\prod_{\ell=0}^{\kappa-1}(2^{2^{\ell}}+1) \cdot (2^{2^{\kappa}}+1)} ) + \frac{N}{3} \\
            <&\frac{N}{3} + \frac{3N}{3} + \frac{5N}{3 \cdot 5} + \frac{9N}{3 \cdot 5 \cdot 17} + \frac{17N}{3 \cdot 5 \cdot 17 \cdot 257} + \cdots \\
            <&\frac{5N}{3} + \frac{N}{3 \cdot 5 \cdot 17} \cdot (9+1) < 1.71N
        \end{aligned}
        \nonumber
    \end{equation}
    It is worth noting that ${S}_{K}^{blb} < 1.71N$ holds for all $N>0$.
    Similarly, 
    \begin{equation}
        \begin{aligned}
            {S}_{K}^{blb} 
            >&\sum_{\kappa=1}^{K-1}(\frac{N \cdot (2^{2^{\kappa}}+1) \cdot 2^{\kappa}}{\prod_{\ell=0}^{\kappa-1}(2^{2^{\ell}}+1) \cdot (2^{2^{\kappa}}+1)}) + \frac{N}{3} \\
            =&\frac{N}{3} + \frac{2N}{3} + \frac{4N}{3 \cdot 5} + \frac{8N}{3 \cdot 5 \cdot 17} + \frac{16N}{3 \cdot 5 \cdot 17 \cdot 257} + \cdots \\
        \end{aligned}
        \nonumber
    \end{equation}

    if $N>255$ holds, we have ${S}_{K}^{blb} > 1.29N$.
\end{proof}

Consequently, we can conclude that our BLB algorithm outperforms the tree adder by about $(2N-1.71N)/(1.71N) = 17\%$ to $(2N-1.29N)/(1.29N) = 55\%$.

\section{Layer-wise Bit Accumulation} \label{sec:LBA}
In this section, we discuss the drawback of BLB algorithm and propose a more efficient algorithm called Layer-wise Bit Accumulation. Besides, we analyze the complexity of LBA algorithm and compare it with that of BLB and TA algorithms.

\subsection{Motivation}
While our BLB algorithm achieves compactness in its final bit representation and efficiency in its bit computations, it remains less efficient during cross-layer computations. For example, the bit computations from $1$-bit numbers to $2$-bit numbers improve the bit representation by $(3-2)/2=50\%$ compared to the tree adder, while the cross-layer computations from $2$-bit numbers to $4$-bit numbers improve only by $(15-14)/14 = 7\%$. Therefore, the question arises: is it feasible to extend bit computations universally to all computations?

\subsection{LBA Algorithm}
The answer is yes and this results in our LBA algorithm. Our design rationale is to perform bit accumulation layer by layer in the bitwise representation. Specifically, we hold a set of $1$-bit numbers for each bit vector. Each bit represents a value of $2^0$-layer. We firstly compute the output of the first layer (i.e., the $2^0$-layer) as follows. Following the method described by {\itshape Bit Computation}, we sum every three numbers in the set by the adder as shown in Figure \ref{fig:adder} to get a carry-out bit and a sum bit, which represent a value of $2^1$-layer and $2^0$-layer, respectively. Then the carry-out bit is added to the set of the $2^1$-layer, and the sum bit is added to the set of the $2^0$-layer. We perform the above computation continuously until there is one element left in the set of the $2^0$-layer. 
Thus, we get a set of bits for the $2^1$-layer, and the output bit of the $2^0$-layer. We then take the set of the $2^1$-layer as input and utilize a similar way to the computation process of the $2^0$-layer to produce a set of bits for the $2^2$-layer and the output bit of the $2^1$-layer. Further, the set of the $2^2$-layer is used as an input to the next process and is computed iteratively until all bits of the bitwise representations are computed. 

\begin{algorithm}[H]
    \caption{Layer-wise Bit Accumulation}
    \label{alg:algorithm3}
    \begin{algorithmic}[1]
        \Require bit vector $\mathbb{X}=\{x_i\}_{i \in [1..n]}$, $x_i \in \{0,1\}$     
        \Ensure $q$ bits vector $\mathbb{Y}=\{y_k\}_{k \in [1..q]}$, $y_k \in \{0,1\}$  
        \State {$\mathbb{X}^C \gets \mathbb{X}, q \gets \lceil \log_2(n+1) \rceil$}
        \For{$k=1$ to $q$}  \Comment{layer-wise computation}
            \State {$\mathbb{X}^0 \gets \varnothing, \mathbb{X}^1 \gets \varnothing, gp \gets |\mathbb{X}^C|/3, r \gets |\mathbb{X}^C|\bmod3$}  
            \While{$gp \geq 1$}
                \State {$t \gets 1$}
                \For{$j=1$ to $gp$}  \Comment{compute by groups}
                    \State {$d_1,d_0 \gets adder(\mathbb{X}^C[t],\mathbb{X}^C[t+1],\mathbb{X}^C[t+2])$} 
                    \State {$\mathbb{X}^1 \gets \mathbb{X}^1 \cup \{d_1\}, \mathbb{X}^0 \gets \mathbb{X}^0 \cup \{d_0\}, t \gets t+3$}
                \EndFor
                \If{$r > 0$} 
                    \State {$\mathbb{X}^0 \gets \mathbb{X}^0 \cup \{\text{Last $r$ elements of } \mathbb{X}^C\}$}
                \EndIf
                \State {$\mathbb{X}^C \gets \mathbb{X}^0, \mathbb{X}^0 \gets \varnothing$} 
                \State {$gp \gets |\mathbb{X}^C|/3, r \gets |\mathbb{X}^C|\bmod3$}
            \EndWhile
            \If{$r = 1$}  \Comment{$\mathbb{X}^C=\{x_0\}$}
                \State {$y_k \gets \mathbb{X}^C[0]$}
            \ElsIf{$r = 2$}  \Comment{$\mathbb{X}^C=\{x_0,x_1\}$}
                \State {$d_1,d_0 \gets adder(\mathbb{X}^C[0],\mathbb{X}^C[1],0)$}
                \State {$\mathbb{X}^1 \gets \mathbb{X}^1 \cup \{d_1\}, y_k \gets d_0$}
            \EndIf
            \State {$\mathbb{X}^C \gets \mathbb{X}^1$} 
        \EndFor
            
    \end{algorithmic}
\end{algorithm}

Algorithm \ref{alg:algorithm3} describes our LBA algorithm. The input of LBA is a bit vector and the output is a vector of $q$ bits, where $q$ denotes the minimum bits that can represent the result of the target computation. The algorithm processes the bits layer by layer, accumulating their values. Below, we detail the computations for each layer.

We perform bit computations by 1-bit adder in the bit vector $\mathbb{X}^C$ and define two sets $\mathbb{X}^1$ and $\mathbb{X}^0$ to store carry-out bits and sum bits, respectively. We divide every three elements of $\mathbb{X}^C$ into a group. As long as there are groups left, we compute iteratively as follows.

\begin{itemize}
	\item[1)] For each group, we invoke the 1-bit adder to obtain a carry-out bit $d_1$ and a sum bit $d_0$ (line 7).
	\item[2)] The carry-out bit $d_1$ and the sum bit $d_0$ responding to each group is added to the sets $\mathbb{X}^1$ and $\mathbb{X}^0$, respectively (line 8).
	\item[3)] All elements in the last group with less than three elements are added to the set $\mathbb{X}^0$ (lines 10-12).
	\item[4)] Assign the set $\mathbb{X}^0$ to the set $\mathbb{X}^C$ and update the relevant parameters of the set $\mathbb{X}^C$ (line 13-14).
\end{itemize}

After performing the above operations, the number of elements in the set $\mathbb{X}^C$ will be either one or two (lines 16-21). If there is only one element in the set (i.e., $\mathbb{X}^C=\{x_0\}$), it means that the value of the currently lowest bit (i.e., $y_k$) is $x_0$. Otherwise, the carry-out bit $d_1$ and the sum bit $d_0$ are computed by 1-bit adder, taking as input the two elements along with a value of 0. $d_1$ is added to $\mathbb{X}^1$, and the currently lowest bit of the output is $d_0$. At this point, $\mathbb{X}^1$ is assigned to $\mathbb{X}^C$ and all elements in $\mathbb{X}^1$ become the values of the currently lowest bits. The above computation process is repeated to obtain $y_1, y_2, \cdots, y_q$. The output vector $\mathbb{Y}=\{y_k\}_{k \in [1..q]}$ is actually the bit representation of the number of $1$'s $c_1=\sum_{k=1}^{q}y_{k} \cdot 2^{k-1}$.  

Figure \ref{fig:eg_ac} demonstrates the computation process of a bit vector consisting of nine elements. First, we divide the nine 1-bit values into three groups. Each group is computed using a 1-bit adder circuit depicted in Figure \ref{fig:adder} to obtain a carry-out bit and a sum bit, which represents a value of layers $2^1$ and $2^0$, respectively. The three computed sum bits form a group and the adder is invoked again to compute a carry-out bit and a sum bit. Thus, we get four 1-bit values of layer $2^1$ and a final value of layer $2^0$. Second, we divide the four 1-bit values of layer $2^1$ into groups. The first three elements form a group and are computed to obtain a carry-out bit and a sum bit, which represent values of layers $2^2$ and $2^1$, respectively. To compute a final value of layer $2^1$, the adder is invoked again, where the inputs are the remaining element and the sum bit along with an auxiliary 1-bit element 0. We hold two values of layer $2^2$ and a final value of layer $2^1$ in this step. Then, we put the two values of layer $2^2$ and an auxiliary bit 0 into a group. The final value of layer $2^3$ and the final value of layer $2^2$ (i.e., the carry-out bit and the sum bit) can be computed by the 1-bit adder. Finally, we hold four final bits, and actually obtain the bit representation of the number of $1$'s $c_1 = 0 \cdot 2^0 + 1 \cdot 2^1 + 1 \cdot 2^2 + 0 \cdot 2^3 = 6$.  

\begin{figure}[ht] 
	\centering
	\includegraphics[width=0.32\textwidth]{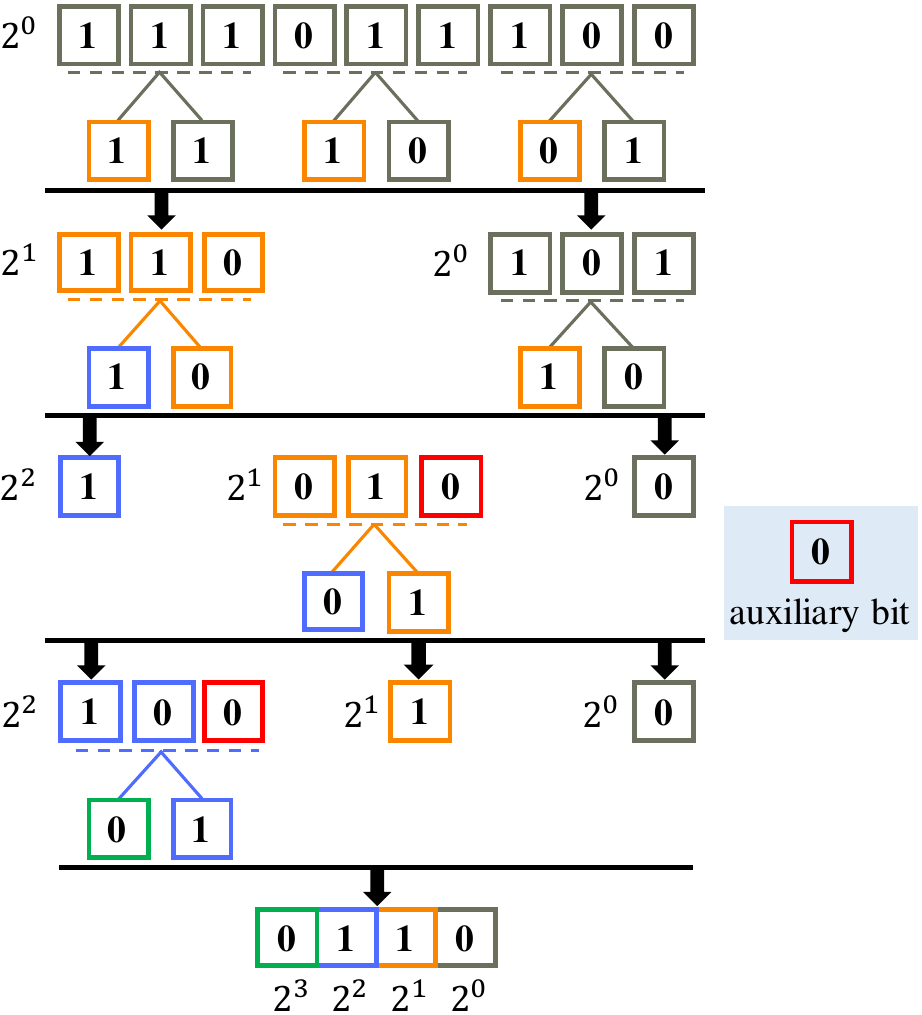}
	\caption{An example of LBA algorithm. The values from the least significant bit up to the most significant bit are computed sequentially (i.e., from layer $2^0$ to $2^3$). The computation of each layer ends up with the remaining number of this layer as one. In each computation, a 1-bit adder is used for every three elements. If the number of elements left is two, it is computed together with an auxiliary bit 0.}
	\label{fig:eg_ac}
\end{figure}

\subsection{Theoretical Complexity Analysis}\label{5.3}
In this section, we analyze the number of non-XOR gates our LBA algorithm needs. 

\begin{theorem}\label{theo:lba}
    The number of non-XOR gates LBA algorithm needs, denoted by ${S}_{N}^{lba}$, satisfies $N- \lceil \log_{2}(N+1) \rceil \leq {S}_{N}^{lba} \leq N$.
\end{theorem}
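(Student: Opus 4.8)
The plan is to reduce everything to counting how many times the $1$-bit adder of Figure~\ref{fig:adder} is invoked, since that adder is the only non-XOR-consuming primitive in Algorithm~\ref{alg:algorithm3} and always costs exactly one AND gate --- even the padded call $adder(\cdot,\cdot,0)$, whose carry collapses to a single AND. I would split the invocations into two kinds: \emph{full} calls, whose three inputs are all genuine data bits (the calls in the inner loop, line~7), and \emph{padded} calls, whose third input is the hardwired $0$ (the cleanup step, lines~16--21). Writing $a$ and $b$ for their numbers, the quantity to bound is ${S}_{N}^{lba}=a+b$. One half of the claim is then almost immediate: a padded call occurs at most once per layer (it lies outside the inner while-loop, which always leaves one or two elements in $\mathbb{X}^C$), and there are $q:=\lceil\log_2(N+1)\rceil$ layers, so $b\le q$.

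The heart of the argument is a global bit-count identity. I would track the set of all genuine (non-constant) bit-wires that ever occur in the run: there are $N$ of them initially, and each adder call manufactures exactly two more (its sum bit and its carry bit), for a total of $N+2(a+b)$. On the consumption side, each such wire is used exactly once: either as one of the genuine inputs of a later adder ($3a+2b$ input slots altogether) or as one of the $q$ output bits $y_1,\dots,y_q$; crucially, no carry-wire is left orphaned. Granting that, $N+2(a+b)=3a+2b+q$, which collapses to $a=N-q$. Together with $0\le b\le q$ this yields $N-q\le {S}_{N}^{lba}=a+b\le N$, which is exactly $N-\lceil\log_2(N+1)\rceil\le {S}_{N}^{lba}\le N$.

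So the real work is to show that no carry-wire is orphaned. The only candidates are carries produced while processing the last layer $k=q$, since the carries of every earlier layer feed the next one. Let $m_k$ be the (value-independent) number of bits entering layer $k$, so $m_1=N$. Running the same entering-versus-leaving count inside a single layer gives $m_k=2a_k+b_k+1$ whenever $m_k\ge1$, where $a_k$ and $b_k$ count that layer's full and padded calls and $b_k\in\{0,1\}$; the carries leaving layer $k$ number $a_k+b_k$, so $m_{k+1}\le (m_k+1)/2$. Iterating and invoking $2^{q-1}\le N<2^{q}$ --- which is precisely what $q=\lceil\log_2(N+1)\rceil$ encodes --- forces $m_q\le 2$; in fact, because the grouping map $m\mapsto\lfloor m/3\rfloor+(m\bmod 3)$ preserves parity, one actually gets $m_{k+1}=\lfloor m_k/2\rfloor$ and hence $m_q=\lfloor N/2^{q-1}\rfloor=1$. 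With at most one bit entering layer $q$, that layer runs no adder at all, nothing is orphaned, and the identity $a=N-q$ holds.

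I expect this last step to be the main obstacle: keeping the wire bookkeeping honest --- distinguishing genuine bit-wires from constant zeros, verifying that the $q$ outputs (and no stray carries) are the only wires not consumed downstream, and extracting the shrinkage estimate on $m_k$ sharply enough that the final layer is provably idle. Everything after that is arithmetic, and the two inequalities of Theorem~\ref{theo:lba} follow at once.
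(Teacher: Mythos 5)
Your proposal is correct and follows essentially the same route as the paper's proof: each invocation of the $1$-bit adder costs exactly one non-XOR gate and reduces the live bit count by one, so the gate count equals the $N$ genuine input bits plus the auxiliary (padded) bits minus the $q=\lceil\log_2(N+1)\rceil$ output bits, with the number of padded calls between $0$ and $q$. Your bookkeeping is in fact tighter than the paper's — you verify $m_{k+1}=\lfloor m_k/2\rfloor$, hence $m_q=1$ and no orphaned carries, yielding the exact identity $a=N-q$ that the paper only asserts implicitly — but the underlying argument is the same.
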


\begin{proof}
    The $1$-bit adder circuit depicted in Figure \ref{fig:adder} is utilized in LBA algorithm. Each circuit inputs three bits and outputs two bits, achieving a one-bit reduction per execution. The input-output bit counts difference dictates the number of non-XOR gates required in LBA algorithm, with each circuit execution needing one such gate.

    Since the number of input bits is $N$, and that of output bits is $\lceil \log_{2}(N+1) \rceil$, LBA algorithm needs at least $N- \lceil \log_{2}(N+1) \rceil$ non-XOR gates, in the case that the computation process of all output bits does not involve any auxiliary bit. Thus, we hold ${S}_{N}^{lba} \geq N- \lceil \log_{2}(N+1) \rceil$. 

    On the other hand, if each output bit requires an auxiliary bit for computation, LBA algorithm needs the most non-XOR gates. In this case, the number of auxiliary bits is $\lceil \log_{2}(N+1) \rceil$, and the number of extra non-XOR gates needed is also $\lceil \log_{2}(N+1) \rceil$. Then, the total number of non-XOR gates is $N- \lceil \log_{2}(N+1) \rceil + \lceil \log_{2}(N+1) \rceil = N$. Therefore, we have ${S}_{N}^{lba} \leq N$.
\end{proof}

As a result, we can conclude that our LBA algorithm outperforms the tree adder by about $(2N-N)/N = 100\%$.

\section{BNN Structure Optimization} \label{sec:BNSCO}
In this section, we further optimize BNN structures to get more efficient boolean circuits based on the BNN customization process described in Section \ref{sec:Sqni}. We use link optimization approach to optimize link connections given a network structure, and structure exploration to explore optimal network structures under same secure computation costs.

\subsection{Link Optimization} \label{sec:LO}
We propose a link optimization method to optimize link connections by first training Ternary Weight-based BNNs (TWBNNs) and then deleting links with zero-weights. Our main idea is as follows. We first quantize full-precision weights to ternary values $\{+\alpha,0,-\alpha\}$, which preserves more network information than binarization due to more values quantized. TWBNNs enhance model information capacity by using TWNs instead of BNNs during training, and thus further mitigating accuracy loss via KD. Then, we train TWBNNs via KD, with original FPWNs as teacher models and TWBNNs as student models. Finally, we prune zero-weight links from trained TWBNNs, yielding Link Reduction-based BNNs (LRBNNs) with optimized binarized ($\{+\alpha,-\alpha\}$) link connections. Pruning zero-weight links in TWBNNs reduces model connections, accelerating inference computation. Figure \ref{fig:link-optimization} depicts the link optimization workflow. 

\begin{figure}[!htbp] 
	\centering
	\includegraphics[width=0.44\textwidth]{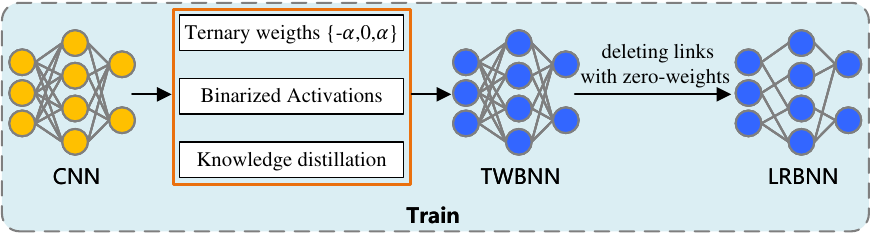}
	\caption{The Workflow of Link Optimization.}
	\label{fig:link-optimization}
\end{figure}

Note that, we can easily convert weights $\{+\alpha,-\alpha\}$ into $\{+1,-1\}$ by passing them through a BN layer of $Y = \frac{1}{\alpha} X$. Via Eq.~\eqref{eq:bn-sign}, we effectively merge linear operations with $Sign$ function and convert them into just an OBC and a comparison. Link optimization yields BNNs retaining only effective non-zero weights. Such BNNs achieve greater efficiency and speed via higher-bit training quantization combined with sparse inference link connections. Therefore, we obtain a more efficient boolean circuit for oblivious inference.

\subsection{Structure Exploration} \label{sec:SE}
We explore BNN structures while maintaining comparable secure computation costs. Figure \ref{fig:equivalent} illustrates a simple modeling of BNNs. The input size and output classes are dataset-constrained, while hidden  layers facilitate classification tasks. Pooling layers partition hidden layers into smaller convolution subsets. The performance costs of BN and BA are negligible compared to convolution, thus excluded from cost analysis. Padding maintains uniform convolution layer dimensions, with only constant-cost subsets considered. Parameter tuning explores cost-constrained optimal BNN structures.

\begin{figure}[!htbp] 
	\centering
	\includegraphics[width=0.42\textwidth]{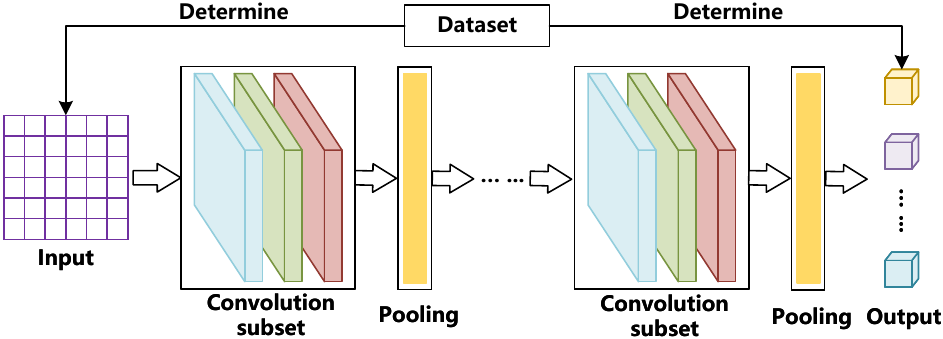}
	\caption{The Modeling of BNNs.}
	\label{fig:equivalent}
\end{figure}

We discuss 1D and 2D convolutions separately. Assume that the network model comprises $z$ convolution subsets, each containing $m$ CONV layers. The filter count and the kernel size in each subset are typically fixed. We compute the cost using the LBA method depicted in Section \ref{5.3}. For ease of analysis, we take ${S}_{N}^{lba} = N$. Thus, the cost of the $m$-th CONV layer in the $z$-th convolution subset is denoted by $S_{N_{m,z}}^{lba}$, where ${S}_{N_{m,z}}^{lba} = N_{m,z}$. 

1) \textbf{1D Convolutions.} The input shape of the $m$-th CONV layer in the $z$-th convolution subset is defined as $h_{m,z}^{1,\operatorname{1D}} \times h_{m,z}^{2,\operatorname{1D}}$. Denote the filter count and the kernel size in the $z$-th convolution subset by $g_{z}^{\operatorname{1D}}$ and $c_{z}$. The output shape is $h_{m,z}^{1,\operatorname{1D}} \times g_{z}^{\operatorname{1D}}$. Based on the above description, we can compute the cost of the $m$-th CONV layer in the $z$-th convolution subset by $C_{m,z}^{\operatorname{1D}} = h_{m,z}^{1,\operatorname{1D}} \cdot g_{z}^{\operatorname{1D}} \cdot N_{m,z}^{\operatorname{1D}} = h_{m,z}^{1,\operatorname{1D}} \cdot g_{z}^{\operatorname{1D}} \cdot c_{z} \cdot h_{m,z}^{2,\operatorname{1D}}$. 
We can get the total cost of the $z$-th convolution subset: $C_{z}^{\operatorname{1D}} =  \sum_{i=1}^{m}C_{i,z}^{\operatorname{1D}}$. 
The cost of all convolution subsets can also be computed: $C^{\operatorname{1D}} =  \sum_{i=1}^{z}C_{z}^{\operatorname{1D}}$.

2) \textbf{2D Convolutions.} The input shape of the $m$-th CONV layer in the $z$-th convolution subset is defined as $h_{m,z}^{1,\operatorname{2D}} \times h_{m,z}^{2,\operatorname{2D}} \times h_{m,z}^{3,\operatorname{2D}}$, where $h_{m,z}^{3,\operatorname{2D}}$ denotes the number of input channels. Denote the filter count and the kernel window size in the $z$-th convolution subset by $g_{z}^{\operatorname{2D}}$ and $o_{z}^{1} \times o_{z}^{2}$. We can know that the output shape is $h_{m,z}^{1,\operatorname{2D}} \times h_{m,z}^{2,\operatorname{2D}} \times g_{z}^{\operatorname{2D}}$. The cost of the $m$-th CONV layer in the $z$-th convolution subset can be computed as follows: $C_{m,z}^{\operatorname{2D}} = h_{m,z}^{1,\operatorname{2D}} \cdot h_{m,z}^{1,\operatorname{2D}} \cdot g_{z}^{\operatorname{2D}} \cdot N_{m,z}^{\operatorname{2D}} = h_{m,z}^{1,\operatorname{2D}} \cdot h_{m,z}^{1,\operatorname{2D}} \cdot g_{z}^{\operatorname{2D}} \cdot h_{m,z}^{3,\operatorname{2D}} \cdot o_{z}^{1} \cdot o_{z}^{2}$.
The total cost of the $z$-th convolution subset can be derived as follows: $C_{z}^{\operatorname{2D}} =  \sum_{i=1}^{m}C_{i,z}^{\operatorname{2D}}$.
We can obtain the total cost of all convolution subsets: $C^{\operatorname{2D}} =  \sum_{i=1}^{z}C_{z}^{\operatorname{2D}}$.

BNN structures are optimized under secure computation cost constraints through parameter exploration. 
Specifically, we always keep $C^{\operatorname{1D}}$ or $C^{\operatorname{2D}}$ constant. There are two ways of modifying parameters: intra-subset and inter-subset. For intra-subset modifications, we can modify the filter count $g_{z}^{\operatorname{1D}}$ (resp. $g_{z}^{\operatorname{2D}}$), the kernel size $c_{z}$ (resp. $o_{z}^{1} \times o_{z}^{2}$), and the number of convolution layers in each subset to keep $C_{z}^{\operatorname{1D}}$ (resp. $C_{z}^{\operatorname{2D}}$) constant. Inter-subset modifications are more complex than intra-subset modifications. We need to balance the costs of several subsets while making intra-subset modifications to preserve global cost constant.

\section{Experimental Evaluation} \label{sec:Experiments}
In this section, we conduct extensive experimental evaluation on FOBNN. First, we evaluate our two OBC algorithms, BLB and LBA. Second, we compare FOBNN with prior art. Third, we deploy FOBNN for real-world RNA function prediction. Finally, we assess two BNN structure optimization approaches, link optimization and structure exploration.

To ensure secure computations, we convert the fundamental operations in each layer to fixed-point arithmetic using GC \cite{evans2018pragmatic,lindell2009proof}. We train our binarized networks with the standard BNN algorithm \cite{courbariaux2016binarized}, implemented in Python with TensorFlow \cite{abadi2016tensorflow} and PyTorch \cite{paszke2019pytorch}. We evaluate our models on various benchmarks using the emp-toolkit library \cite{emp-toolkit} for GC computation. Experiments are implemented on Ubuntu 18.04 with 16GB RAM. We run our benchmarks in two network settings LAN and WAN, where LAN is used by default. The bandwidth is about 10GBps (LAN) and 190MBps (WAN). The latency is about 0.024ms (LAN) and 10ms (WAN), respectively. 
Our key performance metrics are: {\itshape accuracy, runtime,} and {\itshape communication overhead}.

\subsection{Evaluation on OBC Algorithms}
To assess the performance of our OBC algorithms, BLB and LBA, we conduct extensive experiments and compare them with TA, the state-of-the-art OBC algorithm. As shown in Table \ref{tab:cmp}, BLB and LBA achieve up to 1.3$\times$ and 2.5$\times$ faster compared to TA, respectively.
Both schemes significantly reduce non-XOR gate counts and communication overhead, with BLB and LBA decreasing by up to 1.2$\times$ and 2$\times$, respectively. 
The experimental results align closely with our theoretical analysis outlined in Sections \ref{sec:BLB} and \ref{sec:LBA}.

\begin{table}[htbp]
	\caption{Comparison of Runtime (RT.), Non-XOR Gates, and Communication (Comm.) as the Bit Vector Size Varies}
	\label{tab:cmp}
	\centering
    \resizebox{0.42\textwidth}{!}{
		\begin{tabular}{ccccc} 
			\hline 
            \noalign{\vskip 1pt}
			\textbf{Size} & \textbf{Approach} & \textbf{RT. (\textmu s)} & \textbf{Non-XOR gates} & \textbf{Comm. (KB)}\\ 
			\hline
            \noalign{\vskip 1pt}
			\multirow{3}{*}{250} & TA & 59 & 492 & 15.38\\
			& BLB & 47 & 412 & 12.84\\
			& LBA & 28 & 244 & 7.63\\ 
			\hline 
            \noalign{\vskip 1pt}
			\multirow{3}{*}{500} & TA & 110 & 992 & 30.97\\
			& BLB & 90 & 832 & 25.91\\
			& LBA & 48 & 496 & 15.44\\
			\hline
            \noalign{\vskip 1pt}
			\multirow{3}{*}{1000} & TA & 224 & 1992 & 62.19\\
			& BLB & 172 & 1664 & 51.97\\
			& LBA & 89 & 996 & 31.06\\
			\hline
            \noalign{\vskip 1pt}
			\multirow{3}{*}{2000} & TA & 446 & 3992 & 124.66\\
			& BLB & 351 & 3340 & 104.31\\
			& LBA & 185 & 1996 & 62.31\\
			\hline
		\end{tabular}
    }
\end{table}

We implement a two-party SS-based TA (SS-TA) solution to OBC problem built on ABY \cite{demmler2015aby} in LAN and WAN and compare it with our GC-based LBA (GC-LBA) algorithm with same parameters. The SS-TA solution consists of an online phase for computations and an offline phase for arithmetic multiplication triples generation using OT. Table \ref{tab:cmp2ss} shows the comparison results. Our GC-LBA requires only constant rounds of interactions. In contrast, the number of interactions in the two-party SS-TA solution is proportional to the depth of the underlying circuit. The round of interactions dominates communication time. In WAN, the runtime of the two-party SS-TA solution is occupied by its communication time. The BaseOT of SS-TA solution dominates the runtime in LAN and WAN. Therefore, as can be seen in Table \ref{tab:cmp2ss}, our GC-LBA is much more efficient regarding runtime than the SS-TA scheme. The SS-TA solution includes an offline phase, our GC-LBA is slightly better than it in terms of communication overhead.

\begin{table}[!htbp]
	\caption{Performance Comparison between GC-LBA and SS-TA with Various Windows on Convolution Inputs (Cin.)}
	\label{tab:cmp2ss}
	\centering
    \resizebox{0.42\textwidth}{!}{
		\begin{tabular}{cccccc} 
			\hline 
            \noalign{\vskip 1pt}
			\multirow{2}{*}{\textbf{Cin.}} & \multirow{2}{*}{\textbf{Windows}} & \multirow{2}{*}{\textbf{Approach}} & \multicolumn{2}{c}{\textbf{RT. (ms)}} & \multirow{2}{*}{\textbf{Comm. (KB)}}\\ \cline{4-5} 
            \noalign{\vskip 1pt}  
            &  &  & LAN & WAN &  \\
			\hline
            \noalign{\vskip 1pt}
			\multirow{6}{*}{20$\times$20} & \multirow{2}{*}{5$\times$5} & SS-TA & 149.42 & 239.90 & 377.32 \\
			& & GC-LBA & 1.81 & 41.20 & 275 \\ \cline{2-6}
            \noalign{\vskip 1pt} 
            & \multirow{2}{*}{4$\times$4} & SS-TA & 143.659 & 213.57 & 276.66 \\
            & & GC-LBA & 1.35 & 31.7 & 187.5 \\ \cline{2-6}
            \noalign{\vskip 1pt}  
            & \multirow{2}{*}{3$\times$3} & SS-TA & 137.02 & 185.13 & 201.46 \\
            & & GC-LBA & 0.98 & 20.21 & 187.5 \\ 
			\hline
            \noalign{\vskip 1pt}
            \multirow{6}{*}{28$\times$28} & \multirow{2}{*}{5$\times$5} & SS-TA & 158.22 & 270.28 & 643.99 \\
            & & GC-LBA & 3.38 & 52.24 & 539 \\ \cline{2-6}
            \noalign{\vskip 1pt}
            & \multirow{2}{*}{4$\times$4} & SS-TA & 148.93 & 238.17 & 448.41 \\
            & & GC-LBA & 2.81 & 43.43 & 367.5 \\ \cline{2-6}
            \noalign{\vskip 1pt}
            & \multirow{2}{*}{3$\times$3} & SS-TA & 144.61 & 207.25 & 299.40 \\
            & & GC-LBA & 1.74 & 30.21 & 171.5 \\ 
			\hline
		\end{tabular}
    }
\end{table}

\subsection{Comparison with Prior Art}\label{sec:Eva_mnist}
We evaluate two networks (MnistNet1 and MnistNet2) on MNIST dataset and three networks (CifarNet1, CifarNet2, and CifarNet3) on CIFAR-10 dataset. The architectures of MnistNet1, MnistNet2, CifarNet1, CifarNet2, and CifarNet3 are the same as BM2, BM3, BC2, BC1 and BC3 in XONN \cite{riazi2019xonn}. We scale BNN structures with various scaling factor $s$. We choose LeNet and VGG16 with full precision counterparts as teacher models of KD in MNIST and CIFAR-10 for model training, respectively. Table \ref{tab:cmp_xonn} compares FOBNN with previous work in terms of accuracy, runtime and communication. 

\begin{table}[htbp]
	\caption{Comparison of FOBNN with the State-of-the-art for Different Network Architectures (Arch.)}
	\label{tab:cmp_xonn}
	\centering
    \resizebox{0.44\textwidth}{!}{
        \begin{threeparttable}
		\begin{tabular}{cccccc} 
			\hline 
            \noalign{\vskip 1pt} 
			\textbf{Arch.} & \textbf{Framework} & \textbf{RT. (s)} & \textbf{Comm. (MB)} & \textbf{Acc. (\%)} & \textbf{s} \\
			\hline
            \noalign{\vskip 1pt}
			\multirow{10}{*}{MnistNet1} & CryptoNets & 297.5 & 372.2 & 98.95 & - \\
			& DeepSecure & 9.67 & 791 & 98.95 & - \\
			& MiniONN & 1.28 & 47.6 & 98.95 & - \\
            & Chameleon & 2.24 & 10.5 & 99.0 & - \\
            & EzPC & 0.6 & 70 & 99.0 & - \\
            & Gazelle & 0.29 & 8.0 & 99.0 & - \\
            & XONN & 0.29* & 72.53* & 98.64 & 4 \\
            & FOBNN(BLB) & 0.23 & 60.75 & 98.77 & 4 \\
            & FOBNN(LBA) & 0.10 & 37.03 & 98.77 & 4 \\
            & FOBNN(LRBNN) & 0.07 & 23.12 & 98.86 & 4 \\
			\hline 
            \noalign{\vskip 1pt} 
            \multirow{8}{*}{MnistNet2} & MiniONN & 9.32 & 657.5 & 99.0 & - \\
            & EzPC & 5.1 & 501 & 99.0 & - \\
            & Gazelle & 1.16 & 70 & 99.0 & - \\
            & XONN & 0.45* & 118.89* & 99.0 & 2 \\
            & Leia & 37.4 & 328.1 & 99.1 & - \\
            & FOBNN(BLB) & 0.34 & 100.34 & 99.0 & 2 \\
            & FOBNN(LBA) & 0.13 & 64.7 & 99.0 & 2 \\
            & FOBNN(LRBNN) & 0.1 & 43.07 & 99.2 & 2 \\
            \hline
            \noalign{\vskip 1pt} 
            \multirow{9}{*}{CifarNet1} & MiniONN & 544 & 9272 & 82 & - \\
            & Chameleon & 52.67 & 2650 & 82 & - \\
            & EzPC & 265.6 & 40683 & 82 & - \\
            & Gazelle & 15.48 & 1236 & 82 & - \\ 
            & XONN & 22.37* & 6133* & 82 & 3 \\
            & Leia & 123.1 & 919.4 & 72 & - \\
            & FOBNN(BLB) & 17.24 & 5089 & 82 & 3 \\
            & FOBNN(LBA) & 6.12 & 3104 & 82 & 3 \\
            & FOBNN(LRBNN) & 4.2 & 1936 & 84 & 3 \\
            \hline
            \noalign{\vskip 1pt}
            \multirow{4}{*}{CifarNet2} & XONN & 9.35* & 2422* & 72 & - \\
            & FOBNN(BLB) & 6.88 & 1998 & 72 & - \\
            & FOBNN(LBA) & 2.36 & 1219 & 72 & - \\
            & FOBNN(LRBNN) & 1.59 & 746 & 75 & - \\
            \hline
            \noalign{\vskip 1pt} 
            \multirow{5}{*}{CifarNet3} & XONN & 37.75* & 9581* & 83 & 2 \\
            & Leia & 199 & 1829.9 & 81 & - \\
            & FOBNN(BLB) & 29.04 & 7972 & 83 & 2 \\
            & FOBNN(LBA) & 9.53 & 4842 & 83 & 2 \\
            & FOBNN(LRBNN) & 6.14 & 2983 & 85 & 2 \\
            \hline
		\end{tabular}
        \begin{tablenotes}[para,flushleft]
            \footnotesize 
            \item[*] XONN's reported security overhead remains unreproduced due to inaccessible source code and ambiguous implementation details. We report the results under identical FOBNN settings except that XONN uses TA for OBC problems. 
        \end{tablenotes}
        \end{threeparttable}
    }
\end{table}

For evaluation on MNIST, our BLB solution achieves 1293$\times$, 42$\times$, 9.7$\times$, and 110$\times$ faster inference compared to CryptoNets \cite{gilad2016cryptonets}, DeepSecure \cite{rouhani2018deepsecure}, Chameleon \cite{riazi2018chameleon}, and Leia \cite{liu2022leia}, respectively, and our LBA solution is 2975$\times$, 96.7$\times$, 22.4$\times$, and 287.7$\times$ faster. 
Compared to MiniONN \cite{liu2017oblivious}, EzPC \cite{chandran2019ezpc}, Gazelle \cite{juvekar2018gazelle}, and XONN \cite{riazi2019xonn} respectively, BLB achieves up to 27.4$\times$, 15$\times$, 3.4$\times$, and 1.3$\times$ better latency, and LBA has up to 71.7$\times$, 39.2$\times$, 8.9$\times$, and 3.5$\times$ lower latency. 
Table \ref{tab:cmp_xonn} also shows evaluation on CIFAR-10. As can be seen, our BLB solution achieves 31.5$\times$, 3.1$\times$, and 15.4$\times$ faster inference compared to MiniONN, Chameleon, and EzPC respectively, and our LBA solution is 88.9$\times$, 8.6$\times$, and 43.4$\times$ faster. Compared to Gazelle, LBA is 2.5$\times$ faster. Compared to XONN and Leia, BLB is 1.4$\times$ and 7$\times$ faster, and LBA achieves up to 4$\times$ and 20$\times$ faster, respectively. Our BLB and LBA reduce communication overhead by up to 1.2$\times$ and 2$\times$ compared to XONN respectively, which align with our theoretical analyses in Sections \ref{sec:BLB} and \ref{sec:LBA}.

\subsection{Real Application in Bioinformatics}\label{sec:Eva_nRC}
We evaluate FOBNN for RNA function prediction on two real-world bioinformatics datasets: nRC \cite{rossi2019ncrna} and Rfam \cite{kalvari2018rfam}. The nRC dataset contains 6,160 training and 2,529 testing ncRNA sequences across 13 functional classes. The Rfam dataset, featuring 88 distinct classes, comprises 105,864 training, 17,324 validation, and 25,342 testing sequences. We explore 1D k-mer encodings (3mer, 2mer, 1mer) and 2D space-filling curves encodings (Snake, Morton, Hilbert) to convert the raw RNA sequences to binary vectors incorporating the latest padding criteria, and more details can be found in \cite{noviello2020deep}. 
The BNN architecture customized for RNA function prediction initially encodes raw RNA sequences into a 1D or 2D input representation. This architecture then progresses through five CONV layers, each followed by a BN, a BA function, and, if required, a MP layer. Finally, a FC layer with a $Sign$ activation function is added to compress the data, leading to a softmax multi-class classification output layer. 
To accommodate the nRC dataset scale, we configure pooling windows of size 4 in second and fourth layers of the binarized network for 1D k-mer encoding.

We make a performance comparison between the ncrna-deep (i.e., unquantized CNN) model \cite{noviello2020deep} and our customized binarized neural network (CBN) model. 
The comparison results are summarized in Table \ref{tab:Comm_nRC}.
As can be seen, in terms of nRC database, our solution achieves higher accuracy compared to ncrna-deep, and the accuracy of 2D increases more than that of 1D. Our CBN is superior to ncrna-deep, except for 3-mer and 2-mer encodings in the Rfam dataset. These indicate that the accuracy of our CBN can match or even exceed that of the initial CNN, proving the binarization effectiveness of FOBNN. We then perform secure inference using LBA based on our CBN. To analyze the runtime and the communication overhead, we use fundamental circuit operations and the LBA method to implement secure inference of ncrna-deep and our CBN in GC, respectively. 
We can observe from Table \ref{tab:Comm_nRC} that our CBNs perform better than full-precision CNNs in terms of secure computation costs in oblivious inference. It implies fast and efficient computation of LBA solution in FOBNN.

\begin{table}[htbp]
	\caption{Performance Comparison of Oblivious Inference for the nRC and Rfam Datasets with Different Encoding (Enc.)}
	\label{tab:Comm_nRC}
	\centering
    \resizebox{0.46\textwidth}{!}{
		\begin{tabular}{cccccccc} 
			\hline 
            \noalign{\vskip 1pt}
			\multirow{2.3}{*}{\textbf{Enc.}} & \multirow{2.3}{*}{\textbf{Framework}} & \multicolumn{2}{c}{\textbf{RT. (s)}} & \multicolumn{2}{c}{\textbf{Comm. (MB)}} & \multicolumn{2}{c}{\textbf{Acc. (\%)}} \\ \cmidrule(lr){3-4} \cmidrule(lr){5-6} \cmidrule(lr){7-8}  
			&  & nRC & Rfam & nRC & Rfam & nRC & Rfam \\
			\hline
            \noalign{\vskip 1pt}
			\multirow{2}{*}{3mer} & secure ncrna-deep & 47.42 & 12.20 & 37281 & 9868 & 81.61 & 86.43 \\
			& \textbf{secure CBN} & \textbf{10.25} & \textbf{2.95} & \textbf{4290} & \textbf{1303} & \textbf{82.96} & \textbf{84.93} \\ \cline{1-8}
            \noalign{\vskip 1pt}
			\multirow{2}{*}{2mer} & secure ncrna-deep & 51.25 & 13.58 & 41196 & 11010 & 85.01 & 87.63 \\
			& \textbf{secure CBN} & \textbf{7.20} & \textbf{2.35} & \textbf{3155} & \textbf{1141} & \textbf{85.65} & \textbf{86.88} \\ \cline{1-8}
            \noalign{\vskip 1pt}
			\multirow{2}{*}{1mer} & secure ncrna-deep & 93.75 & 24.70 & 75051 & 19884 & 85.69 & 89.24 \\
			& \textbf{secure CBN} & \textbf{9.93} & \textbf{3.96} & \textbf{4770} & \textbf{1922} & \textbf{87.66} & \textbf{90.14} \\
			\hline
            \noalign{\vskip 1pt}
			\multirow{2}{*}{Snake} & secure ncrna-deep & 86.34 & 19.01 & 70479 & 16166 & 78.88 & 81.04 \\
			& \textbf{secure CBN} & \textbf{22.05} & \textbf{5.23} & \textbf{9735} & \textbf{2426} & \textbf{88.06} & \textbf{81.19} \\ \cline{1-8}
            \noalign{\vskip 1pt}
			\multirow{2}{*}{Morton} & secure ncrna-deep & 113.44 & 28.20 & 92624 & 23511 & 79.52 & 81.12 \\
			& \textbf{secure CBN} & \textbf{28.08} & \textbf{6.38} & \textbf{12851} & \textbf{3072} & \textbf{87.35} & \textbf{81.58} \\ \cline{1-8}
            \noalign{\vskip 1pt}
			\multirow{2}{*}{Hilbert} & secure ncrna-deep & 114.85 & 27.67 & 92624 & 23511 & 80.27 & 80.21 \\
			& \textbf{secure CBN} & \textbf{27.49} & \textbf{6.35} & \textbf{12851} & \textbf{3072} & \textbf{86.00} & \textbf{81.62} \\ 
			\hline 
		\end{tabular}
    }
\end{table}

The BLB and LBA are secure computation solutions for the OBC problem. They compute on the same bit vector as the tree-adder structure, and the output is an exact value, so they have no impact on network accuracy. Since we use a GC-based implementation, our solutions only require a constant round of interactions. To ensure fairness, we implement an alternative GC-based approach which adopts TA solution and also requires a constant round of interactions, on the same hardware, and compare our solutions with it.
Table \ref{tab:time_and_nonxor} compares oblivious inference performance of our FOBNN using BLB and LBA solutions with the TA solution.  
It is worth noting that the performance evaluation of oblivious inference involves the whole FOBNN, of which our BLB and LBA algorithms are only a part.
Thus, for overall performance, the non-XOR gates consumption of the LBA architecture is not always half that of the TA architecture.

\begin{table}[htbp]
    \setlength\tabcolsep{3pt} 
	\caption{Total Runtime (TRT.), Total Communication Overhead (TCO.), and Number of Non-XOR Gates (GNum.) for Oblivious Inference of BLB and LBA Compared to TA on nRC and Rfam}
	\label{tab:time_and_nonxor}
	\centering
    \resizebox{0.46\textwidth}{!}{
		\begin{tabular}{ccccccccc} 
			\hline 
            \noalign{\vskip 1pt}
			\multirow{2.2}{*}{\textbf{Dim.}} & \multirow{2.2}{*}{\textbf{Enc.}} & \multirow{2.2}{*}{\textbf{Arch.}} & \multicolumn{2}{c}{\textbf{TRT. (s)}} & \multicolumn{2}{c}{\textbf{TCO. (MB)}} & \multicolumn{2}{c}{\textbf{GNum}} \\ \cmidrule(lr){4-5} \cmidrule(lr){6-7} \cmidrule(lr){8-9} 
			& & & nRC & Rfam & nRC & Rfam & nRC & Rfam \\
			\hline
            \noalign{\vskip 1pt}
			\multirow{9}{*}{1D} & \multirow{3}{*}{3mer} & TA & 18.94 & 5.99 & 5730 & 1788 & 187761396 & 58592164 \\
			& & FOBNN(BLB) & \textbf{15.44} & \textbf{4.84} & \textbf{5103} & \textbf{1593} & \textbf{167586356} & \textbf{52192612} \\ 
            & & FOBNN(LBA) & \textbf{10.25} & \textbf{2.95} & \textbf{4290} & \textbf{1303} & \textbf{140581256} & \textbf{42696116} \\ \cline{2-9}
            \noalign{\vskip 1pt}
			& \multirow{3}{*}{2mer} & TA & 18.74 & 7.08 & 5140 & 1886 & 168429948 & 61808552 \\ 
            & & FOBNN(BLB) & \textbf{14.75} & \textbf{5.22} & \textbf{4435} & \textbf{1628} & \textbf{145328012} & \textbf{53361848} \\ 
            & & FOBNN(LBA) & \textbf{7.20} & \textbf{2.35} & \textbf{3155} & \textbf{1141} & \textbf{103367556} & \textbf{37389492} \\ \cline{2-9}
            \noalign{\vskip 1pt}
            & \multirow{3}{*}{1mer} & TA & 33.81 & 13.41 & 8758 & 3538 & 286969472 & 115919404 \\ 
            & & FOBNN(BLB) & \textbf{25.46} & \textbf{10.53} & \textbf{7427} & \textbf{3003} & \textbf{243371292} & \textbf{98407752} \\ 
            & & FOBNN(LBA) & \textbf{9.93} & \textbf{3.96} & \textbf{4770} & \textbf{1922} & \textbf{156297100} & \textbf{62978744} \\ 
            \hline
            \noalign{\vskip 1pt}
            \multirow{9}{*}{2D} & \multirow{3}{*}{Snake} & TA & 76.55 & 17.09 & 18654 & 4540 & 611257096 & 148762808 \\
            & & FOBNN(BLB) & \textbf{57.56} & \textbf{12.93} & \textbf{15701} & \textbf{3839} & \textbf{514492844} & \textbf{125784984} \\ 
            & & FOBNN(LBA) & \textbf{22.05} & \textbf{5.23} & \textbf{9735} & \textbf{2426} & \textbf{319011984} & \textbf{79490992} \\ \cline{2-9}
            \noalign{\vskip 1pt}
            & \multirow{3}{*}{Morton} & TA & 99.44 & 22.06 & 24635 & 5794 & 807247744 & 189841972 \\ 
            & & FOBNN(BLB) & \textbf{76.12} & \textbf{16.88} & \textbf{20734} & \textbf{4892} & \textbf{679402280} & \textbf{160306976} \\ 
            & & FOBNN(LBA) & \textbf{28.08} & \textbf{6.38} & \textbf{12851} & \textbf{3072} & \textbf{421116312} & \textbf{100651068} \\ \cline{2-9}
            \noalign{\vskip 1pt}
            & \multirow{3}{*}{Hilbert} & TA & 100.35 & 22.07 & 24635 & 5794 & 807247744 & 189841972 \\ 
            & & FOBNN(BLB) & \textbf{75.74} & \textbf{17.06} & \textbf{20734} & \textbf{4892} & \textbf{679402280} & \textbf{160306976} \\ 
            & & FOBNN(LBA) & \textbf{27.49} & \textbf{6.35} & \textbf{12851} & \textbf{3072} & \textbf{421116312} & \textbf{100651068} \\ 
			\hline
		\end{tabular}
    }
\end{table}

In summary, compared to TA in the nRC database, our BLB solution reduces the total runtime by up to 1.3$\times$ in 1D $k$-mer and 2D space-filling curves encodings. Our LBA solution achieves up to 3.4$\times$ and 3.6$\times$ faster inference in 1D $k$-mer and 2D space-filling curves encodings respectively. As can be seen, BLB and LBA reduce the communication overhead by up to 1.2$\times$ and 1.8$\times$ in 1D $k$-mer encodings, and 1.2$\times$ and 1.9$\times$ in 2D space-filling curves encodings, respectively. 
As can be seen in the Rfam database, in terms of the total runtime, our BLB solution is up to 1.4$\times$ and 1.3$\times$ faster than TA in 1D $k$-mer and 2D space-filling curves encodings respectively, and LBA achieves up to 3.4$\times$ and 3.5$\times$ better latency compared to TA respectively. We can observe that BLB and LBA reduce the communication overhead by up to 1.2$\times$ and 1.8$\times$ in 1D $k$-mer encodings, and 1.2$\times$ and 1.9$\times$ in 2D space-filling curves encodings, respectively. 
The number of non-XOR gates consumed is closely related to the communication overhead and thus has the same performance. 

The evaluation results for communication overhead, non-XOR gate consumption, and runtime performance align with our theoretical analyses in Sections \ref{sec:BLB} and \ref{sec:LBA}. 
LBA outperforms theoretical predictions via its efficient single-bit computation scheme, surpassing TA and BLB's multi-bit schemes on EMP-toolkit.

\subsection{BNN Structure Optimization} \label{sec:QNSO}
\subsubsection{Link Optimization} An accuracy gap exists between MnistNet1, MnistNet2, CifarNet1, CifarNet2, CifarNet3 and their FPWNs counterparts. Therefore, we substitute binary weights in BNNs with ternary weights and delete links with zero-weights depicted in Section \ref{sec:LO} to form LRBNNs. We retrain the models with KD and perform secure inference. Table \ref{tab:cmp_xonn} presents LRBNNs' performance. In summary, LRBNNs have better network accuracy than BNNs due to the enhanced ability of weight representation (i.e., binary weights increased to ternary weights). Compared to BNNs with LBA solution, LRBNNs achieve 1.4$\sim$1.6$\times$ faster inference, and reduce communication by up to 1.6$\times$. This decline arises because zero-weights are not involved in oblivious inference computation.

\subsubsection{Structure Exploration} Our CBNs achieve superior accuracy compared to vanilla CNNs in RNA function prediction. Moreover, the model's binary inputs (i.e., 0 or 1) are incompatible with link optimization involving zero-weights. Thus, we apply equivalent modifications in Section \ref{sec:SE} to restructure three networks per CBN. We adopt intra-subset modifications. For 1D convolutions, three modified architectures are defined: MN1 (kernel size reduced to 5 from 10), MN2 (enlarged to 20), and MN3 (constant kernel size with increased CONV layers). In 2D convolutions, kernel size is reduced to 2$\times$2 in MN1 and increased to 4$\times$4 in MN2, while MN3 maintains a 3$\times$3 kernel with increased CONV layers. To maintain constant oblivious inference cost, kernel count scales with kernel size.

\begin{table*}[!htbp]
	\caption{Performance Comparison of Various Modified Architectures while Maintaining Constant Costs in Oblivious Inference}
	\label{tab:Acc_cmp}
	\centering
    \resizebox{0.80\textwidth}{!}{
		\begin{tabular}{ccccccccccccccc} 
			\hline 
            \noalign{\vskip 1pt}
			\multirow{2}{*}{\textbf{Dataset}} & \multirow{2}{*}{\textbf{Encoding}} & \multirow{2}{*}{\textbf{Input}} & \multicolumn{4}{c}{\textbf{Accuracy (\%)}} & \multicolumn{4}{c}{\textbf{Runtime (s)}} & \multicolumn{4}{c}{\textbf{Comm. (MB)}} \\ \cmidrule(lr){4-7} \cmidrule(lr){8-11} \cmidrule(lr){12-15} 
		    & & & \textbf{CBN} & \textbf{MN1} & \textbf{MN2} & \textbf{MN3} & \textbf{CBN} & \textbf{MN1} & \textbf{MN2} & \textbf{MN3} & \textbf{CBN} & \textbf{MN1} & \textbf{MN2} & \textbf{MN3} \\
			\hline
            \noalign{\vskip 1pt}
			 \multirow{3}{*}{nRC} & 3mer & 257$\times$65 & \textbf{82.96} & 82.44 & 81.34 & 80.43 & 10.25 & 10.59 & 9.79 & 10.31 & 4290 & 4364 & 4215 & 4299 \\
			 & 2mer & 385$\times$17 & 85.65 & \textbf{86.00} & 80.55 & 84.34 & 7.20 & 6.92 & 6.56 & 7.04 & 3155 & 3243 & 3066 & 3160 \\
			 & 1mer & 770$\times$5 & 87.66 & \textbf{88.22} & 83.04 & 87.54 & 9.93 & 9.90 & 8.81 & 10.19 & 4770 & 4890 & 4677 & 4805 \\ 
             \hline
             \noalign{\vskip 1pt}
			 \multirow{3}{*}{Rfam} & 3mer & 67$\times$65 & \textbf{84.93} & 84.80 & 82.22 & 83.60 & 2.95 & 3.11 & 2.88 & 3.04 & 1303 & 1366 & 1229 & 1306 \\
			 & 2mer & 100$\times$17 & 86.88 & \textbf{88.11} & 85.54 & 87.00 & 2.35 & 2.49 & 2.28 & 2.36 & 1141 & 1210 & 1047 & 1146 \\
			 & 1mer & 200$\times$5 & \textbf{90.14} & 89.77 & 88.94 & 89.11 & 3.96 & 4.02 & 3.56 & 4.12 & 1922 & 2001 & 1815 & 1933 \\
			\hline
            \noalign{\vskip 1pt}
			 \multirow{3}{*}{nRC} & Snake & 28$\times$28$\times$5 & \textbf{88.06} & 83.79 & 84.86 & 86.71 & 22.05 & 18.76 & 18.42 & 21.49 & 9735 & 9213 & 8827 & 9802 \\
			 & Morton & 32$\times$32$\times$5 & \textbf{87.35} & 80.78 & 83.99 & 84.97 & 28.08 & 25.28 & 24.27 & 28.42 & 12851 & 12136 & 11696 & 12939 \\
			 & Hilbert & 32$\times$32$\times$5 & 86.00 & 82.09 & 83.12 & \textbf{86.12} & 27.49 & 25.27 & 23.97 & 28.74 & 12851 & 12136 & 11696 & 12939 \\
             \hline
             \noalign{\vskip 1pt}
			 \multirow{3}{*}{Rfam} & Snake & 15$\times$15$\times$5 & 81.19 & 83.90 & 82.85 & \textbf{84.24} & 5.23 & 4.78 & 4.53 & 5.27 & 2426 & 2337 & 2168 & 2443 \\
			 & Morton & 16$\times$16$\times$5 & \textbf{81.58} & 78.83 & 79.30 & 81.28 & 6.38 & 6.27 & 5.91 & 6.57 & 3072 & 2950 & 2733 & 3093 \\
			 & Hilbert & 16$\times$16$\times$5 & 81.62 & 80.95 & 81.40 & \textbf{82.21} & 6.35 & 6.17 & 5.96 & 6.49 & 3072 & 2950 & 2733 & 3093 \\
			\hline
		\end{tabular}
    }
\end{table*}

To evaluate comparative performance under constant costs, we retrained modified networks and conducted secure inference experiments. As shown in Table \ref{tab:Acc_cmp}, for large second input dimensions in 1D convolutions, CBN demonstrates superior accuracy, closely matched by MN1. It indicates maintaining CNN-like kernel sizes while augmenting CONV layers can enhance secure inference. However, for substantial dimensional disparities (e.g., 2mer and 1mer), MN1 outperforms others, suggesting kernel size reduction and network depth augmentation are crucial for optimal binarized network design. Conversely, in cases like the 1mer encoding of the Rfam dataset, where the second dimension is small, CBN excels, highlighting the need for tailored binarized network architectures.

Table \ref{tab:Acc_cmp} compares 2D convolution results. CBN and MN3 consistently outperform other architectures, indicating that retaining the vanilla CNN kernel size optimizes 2D convolution binarization efficiency. Larger input pixels necessitate minimal CONV layer increments for optimal performance. Reduced input pixels require increased network depth for accuracy enhancement. Thus, optimal secure inference accuracy requires retaining the vanilla CNN kernel size and elaborately designing the binarized network structure. 

Table \ref{tab:Acc_cmp} shows negligible runtime and communication cost variations across four architectures, satisfying constant-cost oblivious inference requirements. Therefore, optimal network structure selection requires balancing accuracy-runtime-communication tradeoff tailored to application-specific constraints. 

\section{Related Work} \label{sec:RW}
Secure neural network inference has garnered considerable attention in recent years. CryptoNets \cite{gilad2016cryptonets} offers a practical approach to oblivious inference using HE. By integrating cryptography, machine learning, and engineering, it achieves secure inference while maintaining high performance and accuracy. SecureML \cite{mohassel2017secureml} introduces efficient protocols for privacy-preserving machine learning, including linear, logistic regression, and neural networks, leveraging GCs and SS. Notably, it realizes the first privacy-preserving system for neural network training. Since then, solutions like ABY3 \cite{mohassel2018aby3}, SecureNN \cite{wagh2019securenn}, Falcon \cite{wagh2021falcon}, CHET \cite{dathathri2019chet}, 
CrypTFlow2 \cite{rathee2020cryptflow2}, and Cheetah \cite{huang2022cheetah} have been proposed, all relying on SS and HE for secure neural network inference.

DeepSecure \cite{rouhani2018deepsecure} offers a secure deep learning framework leveraging Yao's GCs. However, it lacks guarantees on the privacy of network parameters and data structures. Hybrid oblivious inference protocols have emerged, including MiniONN \cite{liu2017oblivious}, Gazelle \cite{juvekar2018gazelle}, and Chameleon \cite{riazi2018chameleon}, which rely on mixed protocols of GCs, SS, and HE. These approaches vary in complexity and are not tailored for deep learning secure computation. XONN \cite{riazi2019xonn} pioneers the use of GC with BNNs for efficient XNOR-based oblivious inference. However, it does not fully exploit GC's computational advantages by optimizing costly linear operations. Our work focuses on optimizing neural networks for oblivious BNN-based inference, converting expensive linear operations into GC-compatible alternatives. Notable recent research on GCs is arithmetic GCs \cite{heath2024efficient}. However, in this paper, traditional neural networks have been properly binarized and converted into very efficient boolean circuits for GCs with a good accuracy preserved, eliminating the need for arithmetic GCs. Additionally, there is still no practical implementation of arithmetic GCs as far as we know.

There exist researches on secure inference for quantum neural networks using SS. Banners \cite{ibarrondo2021banners}, SecureBiNN \cite{zhu2022securebinn} and FlexBNN\cite{dong2023flexbnn} offer various three-party secure computation solutions for BNNs with replicated SS. Zhang \emph{et al.} \cite{zhang2023scalable} explore GC and additive SS for oblivious BNN inference. However, these methods do not achieve constant-round interactions, regardless of computational depth. FSSiBNN \cite{yang2024fssibnn} achieve secure BNN inference based on function SS, but it is not inherently purely two-party secure computation setting.
Our GC-based oblivious inference offers fast computation with constant-round interactions. Our BLB and LBA solutions can serve as primitives for circuit design, potentially enhancing performance when implemented with SS in future work.

\section{Conclusion} \label{sec:Conclusion}
We have introduced FOBNN, a fast Garbled Circuit-based oblivious inference framework via binarized neural networks. Our FOBNN converts traditional CNNs into customized BNNs, and in turn converts linear operations into nearly cost-free XNOR operations and an OBC problem, and non-linear operations mainly to oblivious comparisons, and finally gets a boolean circuit, which is transformed to a garbled circuit to achieve oblivious inference. To address the OBC problem, the newly emerging performance bottleneck for BNN oblivious inference, we have devised two fast algorithms, Bit Length Bounding and Layer-wise Bit Accumulation, which are of independent interest for secure computations. Theoretical complexity analysis reveals that both algorithms outperform the state-of-the-art OBC algorithm by up to 100\%. Furthermore, we have optimized BNN structures by link optimization under a given BNN structure and structure exploration under constraints of same secure computation costs to obtain more efficient boolean circuits. We have implemented the proof-of-concept system and demonstrated that FOBNN outperforms prior art greatly, while maintaining the advantage of constant number of communication rounds. 

Although our FOBNN enables fast oblivious inference, extending it to large language models is a direction for future research. 
Future work aims to scale our solutions to provide efficient oblivious inference for larger models (e.g., transformers) by enhancing the model's complexity, adopting multi-bit quantization, and designing a novel quantized secure computation algorithm.

{\appendices
\section{Derivation Details of \texorpdfstring{$G_{\kappa}$}{}}  \label{app:2}
In each group computation, the inputs are $2^{\kappa}$-bit numbers and the output is a $2^{\kappa+1}$-bit number. The first $2^{2^\kappa}+1$ numbers are computed using an inverted binary tree structure, where the number of $\ell$-bit adders ($\ell \in [2^{\kappa}..2^{\kappa+1}-1]$) is $2^{2^{\kappa+1}-1-\ell}$. Thus, the total number of non-XOR gates for the first $2^{2^\kappa}+1$ numbers can be computed as follows.
\begin{equation}\label{eq6}
    \begin{aligned}
        {S}_{t} =& \sum_{\ell=2^{\kappa}}^{2^{\kappa+1}-1}(2^{2^{\kappa+1}-1-\ell} \cdot \ell) \\
        =& 2^{2^{\kappa}-1} \cdot 2^{\kappa} + 2^{2^{\kappa}-2} \cdot (2^{\kappa}+1) + \cdots + (2^{\kappa+1}-1)
    \end{aligned}
\end{equation}
Multiplying both sides of Eq.~\eqref{eq6} by $\frac{1}{2}$ yields Eq.~\eqref{eq7}.
\begin{equation}\label{eq7}
    \frac{1}{2} {S}_{t} = 2^{2^{\kappa}-2} \cdot 2^{\kappa} + 2^{2^{\kappa}-3} \cdot (2^{\kappa}+1) + \cdots + \frac{1}{2} \cdot (2^{\kappa+1}-1)
\end{equation}
Eq.~\eqref{eq6} minus Eq.~\eqref{eq7}:
\begin{equation}\label{eq16}
    \begin{aligned}
        \frac{1}{2} {S}_{t} &= 2^{2^{\kappa}-1} \cdot 2^{\kappa} + 2^{2^{\kappa}-2} + 2^{2^{\kappa}-3} + \cdots + 1 - \frac{(2^{\kappa+1}-1)}{2}  \\
        {S}_{t} &= 2^{2^{\kappa}} \cdot 2^{\kappa} + (2^{2^{\kappa}-1} + 2^{2^{\kappa}-2} + \cdots + 2) - (2^{\kappa+1}-1) \\
        {S}_{t} &= 2^{2^{\kappa}} \cdot 2^{\kappa} + 2 \cdot \frac{1-2^{2^{\kappa}-1}}{1-2} - 2^{\kappa+1} + 1 \\
        {S}_{t} &= 2^{2^{\kappa}} \cdot 2^{\kappa} + 2^{2^{\kappa}} - 2^{\kappa+1} -1
    \end{aligned}
    \nonumber
\end{equation}
    


The sum of first $2^{2^\kappa}+1$ numbers produces a $2^{\kappa+1}$-bit number. The non-XOR gates generated by summing it with the remaining $2^{\kappa}$-bit number also needs to be computed. Thus, $G_{\kappa} = {S}_{t} + (2^{\kappa+1}-1) = 2^{2^{\kappa}} \cdot 2^{\kappa} + 2^{2^{\kappa}} - 2$. 
}

\bibliographystyle{IEEEtran}
\bibliography{ref}

\end{document}